\setlist[description]{leftmargin=\parindent,labelindent=\parindent}
\newcommand{\dtodo}[2][]{\todo[color={red!100!green!33}, #1]{#2}}
\newcommand{\?}[1]{% kerned hyphen
  \sbox0{A#1}\sbox2{A\kern0pt #1}%
  \kern\dimexpr\wd0-\wd2\relax
  #1%
}
\newtheorem{theorem}{Theorem}
\newtheorem{definition}{Definition}
\newtheorem{lemma}{Lemma}
\newcommand{\kdtau}{\tau}
\tikzset{cross/.style={cross out, thick, draw=black, minimum size=2*(#1-\pgflinewidth), inner sep=0pt, outer sep=0pt},
%default radius will be 1pt.
cross/.default={5pt}}
\title{Byzantine Preferential Voting}
\author{
\begin{tabular}{p{0.3\textwidth}cp{0.3\textwidth}}
\raggedleft Darya Melnyk & Yuyi Wang &   Roger Wattenhofer \\
\raggedleft \small\texttt{dmelnyk@ethz.ch} & \small \texttt{yuwang@ethz.ch} &  \small \texttt{wattenhofer@ethz.ch}\\
\\
& \small  ETH Z\"urich, Switzerland% &
\end{tabular}
}
\date{}
\begin{document}
%%%%%%%%%%%%%%%%%%%%%%%%%%%%%%%%%%%%%%%%%%%%%%%%%%%%%%%%%%%%%%%%%%%%%%%%%%%%%%%%%%%%%%%%%%%%%%%%%%
%%%%%%%%    Abstract   %%%%%%%%%%%%%%%%%%%%%%%%%%%%%%%%%%%%%%%%%%%%%%%%%%%%%%%%%%%%%%%%%%%%%%%%%%%
%%%%%%%%%%%%%%%%%%%%%%%%%%%%%%%%%%%%%%%%%%%%%%%%%%%%%%%%%%%%%%%%%%%%%%%%%%%%%%%%%%%%%%%%%%%%%%%%%%
 
\begin{titlepage}
\maketitle
\thispagestyle{empty}
\begin{abstract}
In the Byzantine agreement problem, $n$ nodes with possibly different input values aim to reach agreement on a common value in the presence of $t<n/3$ Byzantine nodes which represent arbitrary failures in the system. This paper introduces a generalization of Byzantine agreement, where the input values of the nodes are preference rankings of three or more candidates. We show that consensus on preferences, which is an important question in social choice theory, complements already known results from Byzantine agreement. In addition preferential voting raises new questions about how to approximate consensus vectors. We propose a deterministic algorithm to solve Byzantine agreement on rankings under a generalized validity condition, which we call Pareto\,-\?Validity. These results are then extended by considering a special voting rule which chooses the Kemeny median as the consensus vector. For this rule, we derive a lower bound on the approximation ratio of the Kemeny median that can be guaranteed by any deterministic algorithm. We then provide an algorithm matching this lower bound. To our knowledge, this is the first non-trivial multi-dimensional approach which can tolerate a constant fraction of Byzantine nodes.
%\rtodo{last section still missing?}\dtodo{it's i the second last sentence}
\vspace{1cm}
\dtodo[inline]{Kemeny rule and Kemeny median. NOT: rank aggregation, Kemeny rank}
\dtodo[inline]{we use $t$, not $f$}
\dtodo[inline]{all Validities using \textbackslash?}
\dtodo[inline]{there is no weakly paretian validity anymore}
\dtodo[inline]{All examples with rankings use either $c_1,c_2,\ldots,c_m$ or $c_i,c_j,c_k$}
\dtodo[inline]{there should be no alternatives, only candidates (alternatives are used once in the motivation)}

\end{abstract}
\begin{keywords}
Social Choice, Byzantine Agreement, Pareto\,-\?Validity, Relative Rankings, Arrow Impossibility, Distributed Voting, Multivalued
\end{keywords}
\end{titlepage}

%%%%%%%%%%%%%%%%%%%%%%%%%%%%%%%%%%%%%%%%%%%%%%%%%%%%%%%%%%%%%%%%%%%%%%%%%%%%%%%%%%%%%%%%%%%%%%%%%%
%%%%%%%%    Introduction   %%%%%%%%%%%%%%%%%%%%%%%%%%%%%%%%%%%%%%%%%%%%%%%%%%%%%%%%%%%%%%%%%%%%%%%
%%%%%%%%%%%%%%%%%%%%%%%%%%%%%%%%%%%%%%%%%%%%%%%%%%%%%%%%%%%%%%%%%%%%%%%%%%%%%%%%%%%%%%%%%%%%%%%%%%

\section{Introduction} 
In the first-past-the-post (FPTP) voting mechanism, each voter indicates his/her candidate of choice on a ballot, and the candidate that received the most votes wins. This is a popular voting mechanism that for instance is used in the United States. FPTP voting is related to consensus or Byzantine agreement in the sense that each voter/node has an input, and the voters need to decide on a single output. If Byzantine voters are present, it is natural to try to agree on a candidate with many votes to be robust against Byzantine behavior, e.g., \cite{BenOr,BrachaRB,QueenAlgorithm}. 
FPTP does however face a lot of criticism since it has several issues, in particular the problem of wasted votes to minority parties, but also tactical voting or gerrymandering \cite{orvis2013introducing}. Moreover, one may argue that FPTP will eventually lead to a two-party system, e.g., \cite{sachs2011price}. 

Preferential voting is a powerful alternative: Each voter \textit{ranks} the candidates first, second, third, etc. The collection of all rankings forms a \textit{preference profile} from which a winner (or even a ranking) is determined. Preferential voting is more expressive, harder to manipulate, and solves many of FPTP's problems. In particular there is no problem with wasted votes to minority candidates. If inputs are not just binary, preferential voting will lead to much better decisions. It is therefore remarkable that byzantine agreement research has not given preferential voting any attention.

%Generally, preferential voting is believed to be more difficult to manipulate.

In this paper we want to investigate how robust preferential voting is in a Byzantine environment. 
%We consider the original preferential voting, without simplifications. We use the setting of Byzantine agreement: The voters are machines, each machine ranks all candidates, and the machines try to find a consensus about the winner and the winning ranking. 
In Section \ref{sec:motivation}, we first focus on some basic properties for voting rules, and see that not all of them can be satisfied if the nodes should reach agreement. This is because Byzantine voters are manipulators that modify the result to make it more favorable to themselves. In the main part of the paper (Section \ref{sec:kemeny}) we study how well the voting result intended by the correct (non-Byzantine) voters can be approximated. For this purpose we introduce the Kemeny rule which picks the most central ranking as the voting result. We will provide an algorithm that approximates the solution of the Kemeny rule in the presence of Byzantine voters and prove that this algorithm computes the best possible approximation. We believe that our paper will help to get a deeper understanding of both fault-tolerant distributed systems as well as social choice theory.

\section{Background and Motivation}\label{sec:motivation}
In search of a \emph{fair} rule to elect candidates, philosophers and mathematicians started developing various voting mechanisms and rules already in the beginning of the 18th century. In the middle of the 20th century, Kenneth Arrow \cite{Arrow1951,Arrow1963} was one of the first to formalize existing rules and analyze possibility and impossibility results in an axiomatic fashion, thereby introducing the field of Computational Social Choice. In this section we use this formalism in order to show how well Byzantine agreement connects to voting theory. 

We start by considering the special case of $n$ voters voting on only two candidates $c_1$ and $c_2$. 
In this setting, each voter (node) ranks the two candidates such that its preferred candidate (input value) is ranked first. A vote for a candidate $c_1$ means that the voter strictly prefers $c_1$ to $c_2$, here denoted $c_1 \succ c_2$. A central authority then applies a \textit{social choice function (SCF)} to a given preference profile in order to determine the winner (decision value), or set of winners in case of a tie. An SCF $f$ can be qualified based on the following properties:

\begin{itemize}[noitemsep]
\item $f$ is \textit{anonymous} if interchanging two \textit{voters} (swapping their names) does not change the result %the set of winners remains the same 
\item $f$ is \textit{neutral} if renaming the \textit{candidates} (changing their names) does not change the result %the winners can be received by renaming the previous winners in the same way 
\item $f$ is \textit{positively responsive} if in a case where the decision is a tie ($c_1$ is among the winners) and a voter changes its ranking from $c_2 \succ c_1$ to $c_1 \succ c_2$, candidate $c_1$ becomes the unique winner
\end{itemize}

\noindent One example of an SCF is the \textit{majority rule}. It chooses the candidate that wins most pairwise comparisons against every other candidate. Note that such a winner always exists in elections with two candidates, but not necessarily in the general case with any number of candidates.
Social choice theory shows that the majority rule satisfies all desirable properties for the special case of voting on two candidates:

\begin{theorem}[May's Theorem \cite{May}]
For two candidates and any number of voters, the majority rule is the unique SCF that satisfies anonymity, neutrality and positive responsiveness.
\end{theorem}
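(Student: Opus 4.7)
The plan is to prove the two standard directions separately: first verify majority rule satisfies the axioms, then show any SCF satisfying all three axioms must coincide with it.

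For the easy direction, I would observe that the majority rule depends only on the tally of votes for each candidate, which is manifestly anonymous (permuting voters does not affect the tally) and neutral (renaming candidates renames the winners correspondingly). Positive responsiveness follows because in a two-candidate tie we must have an equal split, and flipping one voter from $c_2 \succ c_1$ to $c_1 \succ c_2$ produces a strict majority for $c_1$.

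For the uniqueness direction, let $f$ be any SCF satisfying anonymity, neutrality, and positive responsiveness. Anonymity reduces $f$ to a function $W(k)$ of the count $k \in \{0, 1, \dots, n\}$ of voters preferring $c_1$. The key lemma is a monotonicity statement: \emph{if $c_1 \in W(k)$, then $W(k+1) = \{c_1\}$}. This follows directly from positive responsiveness applied to any profile with tally $k$ by flipping one $c_2$-voter to a $c_1$-voter. Neutrality, applied to the relabeling $c_1 \leftrightarrow c_2$, gives the symmetry $W(k) = \{c_1\} \iff W(n-k) = \{c_2\}$ and $W(k) = \{c_1, c_2\} \iff W(n-k) = \{c_1, c_2\}$.

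I would then combine these two facts to pin down $W$. For $k > n/2$, suppose for contradiction that $W(k) \neq \{c_1\}$; then either $W(k) = \{c_2\}$ or $W(k) = \{c_1, c_2\}$, and by the neutrality symmetry $c_1 \in W(n-k)$. Iterating the monotonicity lemma from $n-k$ up to $k$ (which is possible since $n-k < k$) yields $W(k) = \{c_1\}$, a contradiction. A symmetric argument handles $k < n/2$. Finally, for even $n$ and $k = n/2$, neutrality forces $W(n/2)$ to be invariant under swapping $c_1$ and $c_2$, so $W(n/2) = \{c_1, c_2\}$; for odd $n$ no such case arises. This exactly reproduces the majority rule.

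The main obstacle is getting past the purely local nature of positive responsiveness: the axiom only tells us what happens at a single ``flip,'' whereas majority rule is a global statement about arbitrary tallies. Bridging this gap is exactly what the monotonicity lemma together with the neutrality symmetry accomplishes — chaining single flips across the symmetric pair $(k, n-k)$ is what rules out any deviation from majority rule.
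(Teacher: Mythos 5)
Your proof is correct, but note that the paper itself gives no proof of this theorem --- it is quoted from the literature (\cite{May}) as background --- so there is no internal proof to compare against. Measured against the classical argument, yours is essentially May's original one: anonymity collapses the SCF to a function $W(k)$ of the tally, neutrality yields the symmetry $W(n-k)=\sigma(W(k))$ under the candidate swap $\sigma$ (which also forces the tie at $k=n/2$ for even $n$), and positive responsiveness supplies the one-step monotonicity that you chain across the pair $(n-k,k)$ to rule out any deviation.

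One point deserves emphasis, because it is exactly where a reader of this paper's axiom statement could go wrong. Your key lemma, ``$c_1\in W(k)$ implies $W(k+1)=\{c_1\}$,'' silently uses positive responsiveness in May's full form: the hypothesis must also cover the case where $c_1$ is already the \emph{unique} winner, not only the tie case. The paper's wording (``in a case where the decision is a tie\,\ldots'') suggests a tie-only reading, and under that weaker axiom your chaining breaks after the first step --- indeed uniqueness then fails outright. For odd $n$, the parity rule $W(k)=\{c_1\}$ if $k$ is odd and $W(k)=\{c_2\}$ if $k$ is even is anonymous, neutral (one checks $W(n-k)=\sigma(W(k))$ since $n-k$ and $k$ have opposite parity), and satisfies tie-only responsiveness vacuously (no profile produces a tie), yet it is not the majority rule. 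Since the parenthetical ``($c_1$ is among the winners)'' indicates the paper intends May's actual axiom, your proof stands as written; it would strengthen the write-up to state explicitly that the lemma invokes the axiom also at profiles where $c_1$ wins uniquely, and to verify the easy direction for that case as well (trivial: $k>n/2$ implies $k+1>n/2$).
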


Interestingly, most known algorithms for binary Byzantine agreement indirectly exploit the properties of May's theorem.
Some of them make use of leaders who suggest their decision value to all nodes \cite{KingAlgorithm, QueenAlgorithm}. 
The leader in these algorithms temporarily plays what is known as a dictator in voting theory. 
Another type of algorithm, e.g., the shared coin algorithm in \cite{BrachaRB}, is biased towards one of the outcomes and thus violates neutrality.
In general we can say that most of the proposed algorithms try to use the majority value as the decision value if a majority exists, or an arbitrary input value otherwise, see for example \cite{BenOr,BrachaRB}. Such settings may satisfy anonymity and neutrality, but in cases where the correct nodes are undecided, i.e. there is a tie between the two input values, Byzantine nodes have a large influence on the majority value. Thus, if a correct node decides to swap two candidates in its ranking in order to make one of the candidates win, a Byzantine node can perform an opposite swap in its own ranking and return the profile to the previous state. This shows that positive responsiveness cannot be satisfied for these algorithms in the presence of Byzantine nodes.

May's theorem does not apply to the general case with more than two candidates. In fact, the majority rule gives surprisingly bad results for three or more candidates. To illustrate this, let $m$ denote the number of candidates. Assume that $n/2+1$ voters rank the candidates as $c_1 \succ c_2 \ldots \succ c_{m}$, and all other voters rank the candidates as $c_2 \succ c_3 \succ \ldots \succ c_1$. In this case candidate $c_1$ wins every pairwise comparison according to the majority rule, even though $c_2$ seems to be the candidate that is approved by more voters.

Moreover, a lot of information is lost when a single winner is sought. When it comes to preferential voting, social choice theory therefore often wants not only the input to be rankings but also the output. More formally:

\begin{definition}[Social Welfare Function]
A Social Welfare Function (SWF) is a map from a preference profile to a set of consensus rankings.
\end{definition}

\noindent For an SWF $g$, the following three properties are usually considered:

\begin{itemize}[noitemsep]
\item $g$ is \textit{dictatorial} if there is one distinguished voter whose input ranking is chosen as the single consensus ranking
\item $g$ is \textit{independent of irrelevant alternatives (IIA)} if the consensus ranking of two candidates $c_i$ and $c_j$ only depends on the relative preference of these candidates in each voter's ranking, and not on the ranking of some third candidate $c_k$
\item $g$ is \textit{weakly Paretian} if it satisfies the weak Pareto  condition \cite{pareto1919}: for two candidates $c_i$ and $c_j$ which are ranked $c_i \succ c_j$ by all voters, consensus ranking has to rank $c_i \succ c_j$ as well
\end{itemize}

\noindent In contrast to IIA and weak Pareto, dictatorship is a highly undesirable property in voting theory. Unfortunately, a corresponding result to May's theorem for SWF's on three or more candidates is the famous impossibility result by Arrow:

\begin{theorem}[Arrow's Impossibility Theorem \cite{Arrow1951}]\label{thm:Arrow}
If there are at least three candidates which the members of the society are free to order in any way, then every SWF that is weakly Paretian and IIA must be dictatorial.
\end{theorem}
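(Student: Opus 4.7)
The plan is to follow the classical pivotal-voter argument (in the style of Geanakoplos). Fix some candidate $b$, and construct a sequence of $n+1$ profiles $P_0, P_1, \ldots, P_n$ where in $P_k$ voters $1, \ldots, k$ rank $b$ strictly at the top of their ranking and voters $k+1, \ldots, n$ rank $b$ strictly at the bottom, with the relative order of all other candidates in each voter's ranking fixed arbitrarily and kept the same across the whole sequence. The goal is to identify a single voter whose choice, by IIA, forces the consensus ranking on every pair.

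First I would establish what is often called the \emph{extremal lemma}: in any profile in which every voter ranks $b$ either strictly at the top or strictly at the bottom, the SWF must also rank $b$ at the top or at the bottom of the consensus. The proof is by contradiction: if some candidate $a$ had $a \succ b$ and some candidate $c$ had $b \succ c$ in the consensus, modify each voter's ranking of $a$ versus $c$ so that everyone prefers $c$ to $a$, without changing any voter's preference between $a,b$ or between $b,c$. By IIA the consensus still satisfies $a \succ b$ and $b \succ c$, hence $a \succ c$ by transitivity, but weak Pareto forces $c \succ a$, a contradiction.

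Next I would extract a pivotal voter. By weak Pareto the consensus in $P_0$ ranks $b$ at the bottom and in $P_n$ ranks $b$ at the top; by the extremal lemma, the consensus in each intermediate $P_k$ places $b$ at one extreme. Let $n^\star$ be the smallest index at which the extreme position of $b$ flips. I claim $n^\star$ is a dictator over every pair $\{a,c\}$ disjoint from $b$. To see this, construct a profile $P'$ from $P_{n^\star}$ by letting voter $n^\star$ rank $a \succ b \succ c$ while all other voters rearrange $a$ and $c$ arbitrarily (keeping $b$ at its previous extreme). By IIA the pair $\{a,b\}$ is ranked as in $P_{n^\star - 1}$, giving $a \succ b$ in the consensus, and the pair $\{b,c\}$ is ranked as in $P_{n^\star}$, giving $b \succ c$, hence $a \succ c$ by transitivity. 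Since the consensus on $\{a,c\}$ depends (by IIA) only on the voters' $\{a,c\}$-preferences, and the other voters' preferences were arbitrary, voter $n^\star$ alone dictates $\{a,c\}$.

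Finally I would promote $n^\star$ to a global dictator by repeating the pivotal construction for a different distinguished candidate $b' \neq b$, obtaining a pivot $n^{\star\star}$ that dictates every pair not involving $b'$; in particular it dictates some pair not involving either $b$ or $b'$, so $n^{\star\star} = n^\star$, and hence $n^\star$ also dictates pairs involving $b$. The main obstacle I anticipate is the bookkeeping in the extremal lemma and in the construction of $P'$: one must carefully invoke IIA so that every modification of ``irrelevant'' pairwise orderings is legitimate, and one must be careful that the other-candidate rankings held fixed throughout the sequence $P_0, \ldots, P_n$ are rich enough that the pivot argument can be replayed for any pair of candidates.
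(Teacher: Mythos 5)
The paper does not prove this statement at all: Arrow's theorem is quoted as background and attributed to \cite{Arrow1951}, so there is no internal proof to compare against, and your proposal must stand on its own. On its own terms, your outline is the standard Geanakoplos pivotal-voter proof, and its first two stages are sound: the extremal lemma is argued correctly (the key point, which you implicitly use, is that because $b$ is extremal in every ballot, $c$ can be moved above $a$ without crossing $b$, so no $\{a,b\}$ or $\{b,c\}$ preference changes), and the extraction of the pivot $n^\star$ together with the profile $P'$ (voter $n^\star$ ranking $a \succ b \succ c$) correctly yields $a \succ c$ in the consensus via IIA against $P_{n^\star-1}$ and $P_{n^\star}$; you omit the symmetric construction showing the $c \succ a$ direction of dictatorship over $\{a,c\}$, but that is routine.

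The genuine gap is in your final unification step. You identify $n^{\star\star} = n^\star$ by finding a pair disjoint from both $b$ and $b'$ that both voters dictate. When $m = 3$ no such pair exists: with candidates $b$, $b'$, $d$, voter $n^\star$ dictates only $\{b',d\}$ and $n^{\star\star}$ only $\{b,d\}$, so your identification fails precisely in the minimal case the theorem covers (``at least three candidates''). Moreover, even for $m \geq 4$, the two runs you describe never cover the pair $\{b,b'\}$ itself, which involves both distinguished candidates, so dictatorship is not established on all pairs. The standard repair avoids both problems: observe that $n^\star$ is \emph{pivotal} for $b$, i.e., between $P_{n^\star-1}$ and $P_{n^\star}$ only voter $n^\star$'s ballot changes, yet the consensus position of $b$ flips from bottom to top, flipping the consensus on every pair $\{b,x\}$. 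If some voter $v \neq n^\star$ dictated a pair $\{b,x\}$, then $v$'s preference on that pair is identical in $P_{n^\star-1}$ and $P_{n^\star}$, so the consensus on it could not flip --- contradiction. Hence any pivot obtained from a second (or third) distinguished candidate must equal $n^\star$, and $n^\star$ dictates every pair, including those involving $b$ and the pair $\{b,b'\}$, for all $m \geq 3$.
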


From the viewpoint of Byzantine agreement, an SWF should not be dictatorial since one does not want a dictator %\ytodo{maybe “Note that in this paper, a leader in the King algorithm plays a role like a dictator, but it is not dictatorial.” there is a difference between dictatorial and dictator ... } 
to be a Byzantine node and choosing more than one dictator may also result in different decision values. Consequently, any reasonable Byzantine agreement protocol must either violate IIA or weak Pareto. 
% Moreover, Byzantine nodes may deliberately bias the decision between two candidates $c_i$ and $c_j$ depending on the position of some third candidate $c_k$ and thus prevent the system from agreement. Therefore the IIA condition cannot be satisfied.\rtodo{this is not really convincing} 
%Moreover, the IIA condition cannot be satisfied in the Byzantine case unless the output is predefined. In fact, this condition does not hold even if there is only one Byzantine node. 
%To see this, let's consider binary agreement. 
The IIA condition implies that the consensus ranking should remain the same if the input of every correct node does not change, no matter what the Byzantine nodes do. 
However, a Byzantine node can pretend to be a correct node but change its ranking in two executions in which the correct nodes have the same inputs. This change may lead to a different consensus ranking which would violate IIA.
%So, the IIA condition requires that the agreement is the same when you change one bit in the inputs, which actually can infer that the agreement is always the same for all possible inputs. 
For the weak Pareto condition consider the case with two candidates: if every non-Byzantine voter ranks $c_1 \succ c_2$, the consensus ranking should also rank $c_1 \succ c_2$. This corresponds to a well-known validity condition in Byzantine agreement -- the \textit{All\hspace{0.1em}-\hspace{0.05em}Same\hspace{0.1em}-\?Validity}: If all correct nodes have the same input value, all correct nodes have to decide on this value. We use the weak Pareto condition to impose a validity rule on Byzantine Agreement with rankings:

%\rtodo{i clipped this paragraph from the intro, maybe we can use it somewhere in this section: Preferential voting also faces some challenges. In particular, preferential voting is more complicated to understand than FPTP. In many democracies throughout the world, preferential voting is therefore used in slightly simplified versions: e.g., not ranking all candidates but only some or using the single transferable vote.}

\begin{description}%[noitemsep]
\item[Pareto\,-\?Validity] for any pair of candidates $c_i$ and $c_j$: if all correct nodes rank $c_i\succ c_j$, then the consensus ranking should rank $c_i\succ c_j$ as well.
\end{description}

Given $m$ candidates, Pareto\,-\?Validity can be viewed as All\hspace{0.1em}-\hspace{0.05em}Same\hspace{0.1em}-\?Validity applied on each of the $\binom{m}{2}$ pairs of candidates in a ranking. Note that Byzantine agreement on a ranking is at least as hard as binary Byzantine agreement: Consider a case where the nodes agree on the ranking of the candidates $c_{3}, \ldots c_{m}$ which they rank last, but not on the two first candidates $c_1$ and $c_2$. The Pareto condition is then satisfied for every binary relation which contains at least one of the candidates $c_{3}, \ldots c_{m}$. Agreement in this case is reduced to binary Byzantine agreement on the two candidates $c_1$ and $c_2$, under the All\hspace{0.1em}-\hspace{0.05em}Same\hspace{0.1em}-\?Validity condition.

Unfortunately, there is no straightforward way to apply a binary Byzantine agreement protocol to solve Byzantine agreement on rankings. Other than binary relations on two candidates, preference profiles can form cycles, e.g., they can contain all three relations $c_i\succ c_j$, $c_j\succ c_k$ and $c_k\succ c_i$ which are each preferred by a majority of nodes. 
The smallest preference profile which produces such a cycle of binary relations is called a \textit{Condorcet cycle}. It contains three rankings $c_i\succ c_j\succ c_k$, $c_j\succ c_k\succ c_i$ and $c_k\succ c_i\succ c_j$ which induce the three relations from above. Simply agreeing on each pair of candidates can thus lead to a circular decision which does not form a ranking. In order to get rid of cycles one could think of applying the quicksort algorithm on the candidates sorted with respect to the majority. This procedure however violates Pareto\,-\?Validity: Consider a candidate $c_i$ that Pareto dominates candidate $c_j$. Assume that the quicksort algorithm compares both candidates to some third candidate $c_k$ first. Then $c_j$ might win against $c_k$ and $c_i$ might lose, thus swapping $c_i$ and $c_j$ in the consensus ranking. 
This consideration makes the problem of finding a consensus ranking in the presence of Byzantine nodes rather an instance of multi-valued agreement, as we discuss in Section \ref{sec:king}, which makes the problem both interesting and challenging. 
% This consideration makes the problem of finding a consensus ranking in the presence of Byzantine nodes both interesting and challenging. 

%%%%%%%%%%%%%%%%%%%%%%%%%%%%%%%%%%%%%%%%%%%%%%%%%%%%%%%%%%%%%%%%%%%%%%%%%%%%%%%%%%%%%%%%%%%%%%%%%%
%%%%%%%%    Related Work   %%%%%%%%%%%%%%%%%%%%%%%%%%%%%%%%%%%%%%%%%%%%%%%%%%%%%%%%%%%%%%%%%%%%%%%
%%%%%%%%%%%%%%%%%%%%%%%%%%%%%%%%%%%%%%%%%%%%%%%%%%%%%%%%%%%%%%%%%%%%%%%%%%%%%%%%%%%%%%%%%%%%%%%%%%

\section{Related Work}

Byzantine agreement was first proposed as the Byzantine Generals problem by Pease, Shostak and Lamport \cite{ByzantineGeneralsPease,ByzantineGeneralsLamport}. 
In these papers the authors showed that three nodes cannot establish agreement in the presence of one Byzantine node even if the communication system is synchronous. Given $n$ nodes, it was shown for the synchronous model that at least $t+1$ rounds are required to establish agreement \cite{FischerLynchMinRounds}, where $t<n/3$ is the number of Byzantine nodes in the system; the corresponding upper bound was provided in \cite{KingAlgorithm,QueenAlgorithm}. For the asynchronous model, the FLP impossibility result \cite{FLPimpossibility} states that there is no deterministic agreement protocol which can tolerate even one Byzantine node. The first randomized algorithm for solving Byzantine agreement proposed in \cite{BenOr} had expected exponential running time for a constant fraction of Byzantine nodes. This result was recently improved in \cite{KingSaia2016}, where the authors showed that it is possible to establish agreement within expected polynomial running time using spectral methods. 

Byzantine agreement with more than two input values has mostly been considered in approximate agreement \cite{ApproximateAgreement,ApproximateAgreement2}, where the input values of the nodes converge towards some value  over rounds. More recent results seek to establish agreement on a value that makes sense for applications. In \cite{PowerOfTwoChoices}, the values converge towards a value at most $\sqrt{n\log{n}}$ positions away from the median. In \cite{MedianValidity,IntervalValidity} an exact algorithm to establish agreement on a value that is at most $t/2$ positions away from the median or $t$ positions away from  a minimum or a maximum was proposed. In \cite{VectorConsensus,VectorConsensusAsynch,VectorConsensusJointWork}, Byzantine agreement was further generalized to several dimensions and the nodes converge to a vector inside the convex hull of all correct input vectors. While the one-dimensional case has been investigated in depth, all previous approaches for multiple dimensions struggle to derive an algorithm which either can tolerate a constant fraction of Byzantine nodes independent on the number of dimensions, or find a solution that is not trivial.
%\todo[inline]{Byzantine agreement: vector agreement, median validity, interval validity}
%First intro into Byzantine Agreement: \cite{ByzantineGeneralsPease,ByzantineGeneralsLamport}
%Impossibility Result: \cite{FLPimpossibility}
%Synchronous Algorithms: \cite{KingAlgorithm,QueenAlgorithm}
%Min number of rounds in the synchronous case: \cite{FischerLynchMinRounds}
%Reliable Broadcast: \cite{TouegRB,BrachaRB}
%Vector Agreement: \cite{VectorConsensus,VectorConsensusAsynch,VectorConsensusJointWork}
%Median Validity: \cite{MedianValidity}\cite{IntervalValidity}

In social choice theory, Byzantine behavior can be interpreted as manipulation of a ballot in an election, in which the manipulating party has full knowledge about all votes. Bartholdi et al.\ \cite{BartholdiManipulation} defined manipulation as a preference profile where one single voter can change its ranking such that this voter's most preferred candidate wins the election. Groups of voters have also been considered in this context, but mostly from the perspective of how hard it is for a group of nodes to manipulate the voting result given a certain voting rule \cite{manipulationHard1,manipulationHard2}. Other types of Byzantine behavior have been considered with respect to robustness of proposed voting rules. In \cite{robustVoting}, the authors investigate robustness of Borda's mean and median in the presence of outlier ballots. In \cite{robustVotingNoise}, robustness of scoring rules is considered under arbitrary noise which is described in terms of pairwise swaps of candidates in the ranking of one voter. 

In this paper we will consider the Kemeny rule which was first proposed in \cite{Kemeny1959,KemenySnell}. The corresponding Kemeny median satisfies additional properties to those presented in Section \ref{sec:motivation}, but it was shown to be NP-hard to compute for an increasing number of candidates and already for four voters in \cite{BartholdiNPhard,DworkNPhard}. %Constant factor approximations to the Kemeny median were derived in \cite{Ailon,DiaconisGraham,vanZuylen2008}.\rtodo{be more precise.} 
At least three different $2$-approximation algorithms for the Kemeny median have been proposed in \cite{Ailon} and \cite{DiaconisGraham} respectively. In \citep{Ailon}, the approximation ratio was improved to $4/3$ using randomization, and later derandomized in \cite{vanZuylen2008}. A good overview over the Kemeny rule and an extended introduction into social choice theory can be found in \cite{Brandt2016}.
%\todo[inline]{Voting theory: computational social choice, impossibility results, Kemeny rankings}
%Pareto: \cite{pareto1919}
%May's theorem: \cite{May}
%Arrow: \cite{Arrow1951,Arrow1963}
%The Kemeny rule: \cite{Kemeny1959,KemenySnell}
%Kemeny is NP hard: \cite{BartholdiNPhard,DworkNPhard}
%Approximations to Kemeny (deterministic and randomized): \cite{DiaconisGraham,Ailon,vanZuylen2008} \todo{there are many more papers to cite, especially the first ones by Ailon would be smart}
%Good books: \cite{Brandt2016}
%guideline

%%%%%%%%%%%%%%%%%%%%%%%%%%%%%%%%%%%%%%%%%%%%%%%%%%%%%%%%%%%%%%%%%%%%%%%%%%%%%%%%%%%%%%%%%%%%%%%%%%
%%%%%%%%   King Algorithm for Pareto Validity    %%%%%%%%%%%%%%%%%%%%%%%%%%%%%%%%%%%%%%%%%%%%%%%%%
%%%%%%%%%%%%%%%%%%%%%%%%%%%%%%%%%%%%%%%%%%%%%%%%%%%%%%%%%%%%%%%%%%%%%%%%%%%%%%%%%%%%%%%%%%%%%%%%%%

\section{A Deterministic Algorithm for Pareto\,-\?Validity} \label{sec:king}

This section focuses on Byzantine agreement protocols for rankings that satisfy Pareto\,-\?Validity. By using single transferable voting and a multi-valued Byzantine agreement algorithm, a ranking satisfying Pareto\,-\?Validity can be obtained in $(m-1)\cdot (t+1)$ rounds: In the first $t+1$ rounds, we let the voters apply the King algorithm in order to agree on the top candidate. Then every node removes this candidate from its ranking. In the next step, they will agree on the top candidate from the reduced rankings, and so on. While this procedure is simple, the number of rounds depends not only on the number of nodes, but also on the number of candidates.

In the following we present a deterministic algorithm which solves this problem in only $t+1$ rounds using the same number of messages. We do this by modifying the King algorithm to broadcast rankings instead of single candidates. In the proposed algorithm we select $t+1$ different nodes and assign each of them to one of the $t+1$ rounds of the algorithm. Such a node is called a dictator of the corresponding round. This dictator then suggest its own, possibly adjusted, ranking to all nodes, which will always be accepted if the dictator is a correct node. This way, dictators decide on the ranking of all pairs of candidates which do not satisfy the Pareto\,-\?Validity. Algorithm \ref{alg:king} presents this procedure in pseudocode.

\algblockdefx{MyIf}{EndMyIf}[1]{\textbf{if} #1 \textbf{fix $\mathbf{c_k \succ c_l}$}}{\textbf{end if}}

\begin{algorithm}
               \begin{algorithmic}[1]
                   \Statex Every node $v$ executes the following algorithm
        \For{ round $1$ to $t+1$}
                             \Statex \hspace{\algorithmicindent}\textit{Communication Phase:}
                             \Indent
                \State Broadcast own input ranking $r_v$
                \For{all pairs of candidates $c_i$ and $c_j$}
                                 \If{$c_i$ is ranked above $c_j$ in at least $n-t$ rankings}\label{step:propose}
                        \State Broadcast ``propose $c_i \succ c_j$''
                    \EndIf
                \EndFor
                \MyIf{some ``propose $c_k\succ c_l$'' received at least $t+1$ times}\label{step:adjustRanking}
                    \State Adjust own ranking $r_v$ such that $c_k$ appears before $c_l$
                \EndMyIf
                             \EndIndent
                             \Statex \hspace{\algorithmicindent}\textit{Dictator Phase:}
            \Indent
                \State Let node $w$ be the predefined dictator of the current round
                \State The dictator broadcasts its ranking $r_{dictator} \coloneqq r_w$
            \EndIndent
            \Statex \hspace{\algorithmicindent}\textit{Decision Phase:}
            \Indent
                \If{$r_{dictator}$ agrees with $r_v$ in all fixed pairs $c_i \succ c_j$ from step \ref{step:adjustRanking}}\label{step:adaptDictator}
                   \State $r_v \coloneqq r_{dictator}$
                \EndIf
            \EndIndent
        \EndFor
        \State Return $r_v$
               \end{algorithmic}
               \caption{Byzantine agreement protocol on rankings (for $t < n/3$)}
               \label{alg:king}
\end{algorithm}

Since we are dealing with rankings, it is not trivial to see that the nodes will always be able to agree on a proper ranking at the end of the algorithm. In the following lemmas we will prove that the nodes can adjust their rankings in Step \ref{step:adjustRanking} of Algorithm \ref{alg:king} in order to guarantee Pareto\,-\?Validity and that the outcome of the algorithm will be a proper ranking. It is easy to see that the algorithm is correct for $t < n/4$ Byzantine nodes, since the correct nodes will not be able to propose binary relations which form a Condorcet cycle in this case.
In order to show that the algorithm can tolerate $t < n/3$ Byzantine nodes, we need to exploit the fact that no Byzantine node can propose relations that form a Condorcet cycle at any point of the algorithm.

\begin{lemma}\label{lemma:noCondorcet}
There is no Condorcet cycle that can be proposed by the correct nodes if $t < n/3$.
\end{lemma}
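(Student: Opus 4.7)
The plan is to argue by contradiction, supposing that some single correct node $v$ broadcasts all three ``propose'' messages for a Condorcet cycle---say $c_i\succ c_j$, $c_j\succ c_k$, and $c_k\succ c_i$---in the same round. By the threshold in Step~\ref{step:propose} of Algorithm~\ref{alg:king}, each such proposal requires that at least $n-t$ of the $n$ rankings $v$ holds (its own input plus one ranking broadcast by every other node) place the two candidates in the asserted order. Summed across the three cycle relations, $v$ therefore counts at least $3(n-t)$ support incidences of the form (ranking in $v$'s view, cycle relation).

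The combinatorial core I will use is the observation that any strict linear order on $\{c_i,c_j,c_k\}$ is itself acyclic and hence supports \emph{at most} two of the three cycle relations; supporting all three would make the linear order cyclic. Applying this uniformly to each of the $n$ rankings in $v$'s view---correct or Byzantine alike, since every received message is one linear order---bounds the same count from above by $2n$. Combining both estimates yields $3(n-t)\le 2n$, i.e.\ $n\le 3t$, directly contradicting the assumption $t<n/3$.

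The main obstacle is conceptual rather than technical: I must carry out the counting over all $n$ rankings inside a single correct observer's view (including rankings supplied by Byzantine nodes), not over the $n-t$ globally consistent correct inputs alone. Only the former pairs the ``at most two out of three'' bound per ranking with a pool of size $n$, producing the tight threshold $n/3$; restricting to only the $n-t$ correct rankings would yield the weaker $n/4$ regime already noted right before the lemma. A small side-check, to make the combinatorial claim airtight, is to verify by enumeration of the six linear orders on three candidates that each supports exactly one or two of the three relations of a fixed Condorcet cycle, so that the per-ranking coefficient $2$ is valid uniformly.
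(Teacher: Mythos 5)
Your double\hspace{0.1em}-counting idea is attractive and the arithmetic is right, but as written it proves a strictly weaker statement than Lemma~\ref{lemma:noCondorcet} requires. You assume that a \emph{single} correct node $v$ issues all three propose messages of the cycle, and you count inside $v$'s view alone. The lemma, however, must rule out the case where the three relations are proposed by \emph{different} correct nodes: in Step~\ref{step:adjustRanking} a node fixes a pair after receiving only $t+1$ propose messages, and the three batches of $t+1$ proposers need not share any common correct member. The paper's own proof starts precisely from this weaker hypothesis (``each binary relation was proposed by at least one correct node'') and proceeds by a different route: an inclusion--exclusion over the supporter sets of the three relations, which exhibits one correct node whose broadcast ranking would have to contain both $c_i\succ c_j\succ c_k$ and $c_k\succ c_i$ --- impossible for a linear order. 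Your argument never confronts this multi-proposer case, so there is a genuine gap.

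The repair is short, but it relies on an assumption you should state: each correct node broadcasts one ranking to everybody, and the paper implicitly treats the Byzantine rankings of a round as a single global profile as well (its quantities $t_1,t_2,t_3,t_{1\cap 2}$ are defined per node, not per view). Under that consistent-view abstraction, each of the three proposals --- no matter which correct node issued it --- certifies at least $n-t$ supporters of its relation within the \emph{same} profile of $n$ rankings, so your count $3(n-t)\le 2n$, hence $n\le 3t$, applies verbatim and contradicts $t<n/3$. Note that this assumption does real work: if a Byzantine node could show different rankings to the different proposers, each proposal would only certify $n-2t$ \emph{correct} supporters, and your counting would yield merely $3(n-2t)\le 2(n-t)$, i.e., $n\le 4t$ --- exactly the $t<n/4$ regime you yourself flagged as insufficient. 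With this one-sentence amendment your proof becomes complete, and it is arguably cleaner than the paper's: the per-ranking bound (any linear order supports at most two of the three cycle relations, in fact exactly one or two) is immediate, whereas the paper's intermediate estimate $n-t-t_1-t_2+t_{1\cap 2}$ for the number of correct nodes ranking $c_i\succ c_j\succ c_k$ is not fully justified as stated.
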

\begin{proof}
Assume by means of contradiction that the three relations $c_i\succ c_j$, $c_j\succ c_k$ and $c_k\succ c_i$ were each proposed by at least $t+1$ nodes in Step \ref{step:propose} of Algorithm \ref{alg:king}. Each binary relation was proposed by at least one correct node who must have seen $n-t$ nodes having a ranking with such a pair.

Let $t_1$ be the number of all Byzantine nodes who proposed $c_i \succ c_j$, $t_2$ the number of those who proposed $c_j \succ c_k$ and $t_3$ those nodes who proposed $c_k \succ c_i$. Further, let $t_{1\cap 2}$ denote the number of Byzantine nodes who proposed $c_i \succ c_j \succ c_k$. The following inequality then holds: $t_1 + t_2 -t_{1\cap 2} \leq t$.

The number of correct nodes who proposed $c_i \succ c_j \succ c_k$ is then $(n-t-t_1)+(n-t-t_2)+t_{1\cap 2} - (n-t) = n-t-t_1-t_2+t_{1\cap 2}$. The number of correct nodes who proposed $c_k\succ c_i$ is $n-t-t_3\geq n-2t$. However, the two sets must have a nonempty intersection, since $$n-t-t_1-t_2+t_{1\cap 2} + n-2t -(n-t) = n-2t -t_1-t_2+t_{1\cap 2} \geq n-3t > 1.$$ Therefore, there must be at  least one correct node who proposed $c_i\succ c_j\succ c_k$ and $c_k\succ c_i$ simultaneously. This is a contradiction.
\end{proof}

Note that by the properties of the King algorithm, no two opposite binary relations can be proposed in Step \ref{step:propose} simultaneously. Lemma \ref{lemma:noCondorcet} additionally shows that a Condorcet cycle cannot be proposed in Step \ref{step:propose} and thus all proposed pairs can form a ranking. It remains to show that the nodes will always be able to adjust their rankings to incorporate the proposed pairs.

\begin{lemma}\label{lem:adjustRanking}
In Step \ref{step:adjustRanking} a correct node will always be able to incorporate the proposed pairs into its own ranking.
\end{lemma}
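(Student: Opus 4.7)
The plan is to show that the set $P$ of pairs that the correct node $v$ fixes in Step \ref{step:adjustRanking}, namely $P=\{(c_k,c_l)\colon v \text{ receives ``propose } c_k \succ c_l\text{'' at least } t+1 \text{ times}\}$, admits a linear extension on the candidate set; $v$ can then take any such extension to be its adjusted ranking $r_v$. Since a finite binary relation on the candidates admits a linear extension if and only if it is acyclic, the whole claim reduces to showing that $P$ contains no directed cycle.

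I would first record the ``strong supporter'' consequence implicit in the proof of Lemma \ref{lemma:noCondorcet}: for every $(c_k,c_l)\in P$ at least one of the $\geq t+1$ proposers is correct, so it witnessed $n-t$ rankings with $c_k\succ c_l$, of which at most $t$ come from Byzantine nodes, and hence at least $n-2t$ correct nodes truly hold $c_k\succ c_l$ in their input ranking. Lemma \ref{lemma:noCondorcet} itself already rules out $3$-cycles in $P$. To extend acyclicity to cycles of arbitrary length $k\geq 4$, I would assume a putative cycle $c_{i_1}\succ c_{i_2}\succ\cdots\succ c_{i_k}\succ c_{i_1}$ in $P$ and iterate the Lemma \ref{lemma:noCondorcet}-style inclusion-exclusion along the cycle: repeatedly intersecting the supporter sets of consecutive cycle pairs (each of size at least $n-2t$) and using transitivity of the correct linear rankings, I obtain a set of correct nodes whose rankings must contain $c_{i_1}\succ c_{i_k}$; simultaneously, at least $n-2t$ correct nodes must hold $c_{i_k}\succ c_{i_1}$. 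Since these two sets are disjoint inside the $n-t$ correct nodes, a Bonferroni-type estimate yields a contradiction with $t<n/3$, once the Byzantine contribution is counted tightly using the fact that any single Byzantine linear ranking contains at most $k-1$ of the $k$ cycle pairs. Once $P$ is shown to be acyclic, a topological sort produces the required linear order that respects every pair of $P$, and $v$ adopts it as $r_v$.

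The main obstacle is making the iterated counting of the second step tight for every $k\geq 4$: a naive Bonferroni bound loses an additive term of order $t$ at each transitive step, so without further structure the estimate only goes through for $n>kt$, whereas here we only have $n>3t$. The key extra ingredients are the at-most-$k-1$ bound on cycle pairs in any Byzantine linear ranking, together with the observation that every chord $(c_{i_a},c_{i_b})$ derived by transitivity either already lies in $P$---in which case it shortens the cycle---or has its reverse excluded by Lemma \ref{lemma:noCondorcet}; iterating this chord reduction drives the cycle length down to three, where Lemma \ref{lemma:noCondorcet} applies directly.
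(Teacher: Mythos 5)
Your opening reduction---show the fixed set $P$ is acyclic, then take any linear extension---is in substance the same as the paper's proof, which just exhibits one concrete extension (candidates occurring in proposed pairs first, ordered by those pairs, the untouched candidates after, gaps filled from the local ranking). The genuine gap is exactly where your last paragraph hedges: the treatment of cycles of length $k\geq 4$. The chord reduction you sketch is unsound, because transitivity only gives you roughly $n-3t$ \emph{correct nodes holding} the chord $(c_{i_a},c_{i_b})$ in their rankings; holding is not proposing, so the chord need not lie in $P$. Lemma \ref{lemma:noCondorcet} excludes the \emph{reverse} chord from $P$, but that does not insert the chord itself into $P$, so a $P$-cycle need not have any $P$-chord and your iteration never shortens it to length three. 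The counting route also cannot be repaired: with supporter sets of size $\geq n-t-b_i$ and $\sum_i b_i\leq (k-1)t$ (your correct observation that a Byzantine linear order contains at most $k-1$ of the $k$ cycle pairs), the inclusion--exclusion along the cycle yields a contradiction only when $n\leq kt$ fails---conclusive for $k=3$, vacuous for $k=4$ once $3t<n\leq 4t$.

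This is not merely a defect of the estimates: in that regime the acyclicity claim you are trying to prove is false. Take $n=4t$ (so $t<n/3$) and four candidates: let $t$ correct nodes each input $c_2\succ c_3\succ c_4\succ c_1$, another $t$ input $c_3\succ c_4\succ c_1\succ c_2$, another $t$ input $c_4\succ c_1\succ c_2\succ c_3$, and let the $t$ Byzantine nodes consistently broadcast $c_1\succ c_2\succ c_3\succ c_4$ (for $m>4$, append $c_5,\dots,c_m$ at the bottom of every ranking). Each of the four pairs $c_1\succ c_2$, $c_2\succ c_3$, $c_3\succ c_4$, $c_4\succ c_1$ then appears in exactly $3t=n-t$ of the $n$ rankings, so \emph{every} correct node proposes all four pairs, every correct node receives each propose message $n-t\geq t+1$ times, and Step \ref{step:adjustRanking} demands fixing a directed four-cycle that no ranking can incorporate; meanwhile no three of the fixed pairs form a proposed Condorcet cycle and no opposite pair is ever proposed, so Lemma \ref{lemma:noCondorcet} is not contradicted. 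Your instinct that $k\geq4$ is the crux is therefore exactly right---note that the paper's own proof is silent on this point, asserting before the lemma that excluding opposite pairs and (three-element) Condorcet cycles already implies ``all proposed pairs can form a ranking'' and then only describing the completion---but your proposal as written does not close the case, and no sharper count can, since the statement fails for $n/4\leq t<n/3$ without changing the algorithm or its thresholds.
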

\begin{proof}
This can be achieved by the following strategy: Divide the candidates into two sets. The first set contains all candidates which are in at least one of the pairs proposed by the $t+1$ nodes in Step \ref{step:adjustRanking}. This set of nodes will be ranked first. The second set will contain all candidates for which the node has not received any propose message. The candidates will be ranked second and will be dominated by all candidates from the first set. Next, we can rank all candidates in the first set according to the proposed relations, possibly leaving some pairs of the candidates not ranked. In the last step, all candidates which have not been ranked in each of the sets can be ranked by choosing binary relations from the local ranking of the node. This strategy outputs a ranking of candidates in which all proposed binary relations are satisfied.
\end{proof}

The next theorem summarizes the correctness results of Algorithm \ref{alg:king} and states that the consensus ranking will be valid, which can be derived with help of previous lemmas. The corresponding proof can be found in Appendix \ref{app:kingAlgo}.

\begin{theorem}\label{thm:kingAlgo}
At the end of Algorithm \ref{alg:king} all nodes will have agreed on the same ranking which additionally satisfies Pareto\,-\?Validity.
\end{theorem}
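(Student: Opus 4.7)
My plan is to prove Theorem \ref{thm:kingAlgo} by separately establishing Pareto\,-\?Validity and agreement, leveraging Lemmas \ref{lemma:noCondorcet} and \ref{lem:adjustRanking} together with the King algorithm structure of Algorithm \ref{alg:king}.

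For Pareto\,-\?Validity I would prove the round-invariant that if every correct node ranks $c_i \succ c_j$ at the start of some round, every correct node still ranks $c_i \succ c_j$ at its end. Since all $n-t$ correct nodes broadcast rankings with $c_i \succ c_j$ in the communication phase, every correct receiver sees at least $n-t$ such rankings and broadcasts ``propose $c_i \succ c_j$''. Hence every correct node receives at least $n-t \geq t+1$ propose messages, fixes the pair in Step \ref{step:adjustRanking}, and by Lemma \ref{lem:adjustRanking} its adjusted ranking contains $c_i \succ c_j$. In the decision phase the node either keeps this ranking or adopts the dictator's ranking, which is accepted only if it is compatible with all fixed pairs. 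Either way $c_i \succ c_j$ survives, and induction over the $t+1$ rounds yields Pareto\,-\?Validity at termination.

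For agreement I follow the King paradigm: among the $t+1$ predefined dictators at least one is correct; call this the \emph{good round} with correct dictator $d$. I will prove \emph{persistence}, that if all correct nodes hold the same ranking $r$ at the start of a round then they also hold $r$ at its end, and \emph{convergence}, that at the end of the good round all correct nodes hold $r_d$. Persistence reduces to the validity argument applied to every pair of $r$: each such pair is fixed by every correct node, so every correct node's adjusted ranking equals $r$, and the dictator's broadcast is adopted only if it equals $r$. For convergence it suffices to show that $r_d$ is consistent with every pair fixed by any correct node $v$: if $v$ fixed $c_i \succ c_j$, then some correct node must have observed $n-t$ rankings with $c_i \succ c_j$, so at least $n-2t$ correct nodes started the round with that order; by the counting underlying Lemma \ref{lemma:noCondorcet}, the bound $n>3t$ then rules out that any correct node, in particular $d$, could have fixed $c_j \succ c_i$.

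The main obstacle I anticipate is upgrading this to a positive statement: while $d$ does not fix $c_j \succ c_i$, $d$ may also not receive $t+1$ propose messages for $c_i \succ c_j$, so the orientation of this pair in $r_d$ is determined by $d$'s local ranking together with the adjustment of Lemma \ref{lem:adjustRanking}. My plan is to show that the pairs fixed across all correct nodes jointly form an acyclic relation, and that the correct dictator can always carry out an adjustment whose resulting total order refines this joint relation; this makes $r_d$ automatically compatible with every correct node's fixed pairs and triggers adoption in Step \ref{step:adaptDictator}. Combining persistence with convergence in the good round then yields agreement, and together with Pareto\,-\?Validity this completes the proof of Theorem \ref{thm:kingAlgo}.
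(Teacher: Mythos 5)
Your validity argument and your persistence argument are sound and essentially identical to the paper's (Lemma \ref{lem:paretoValidity} and the first half of the agreement lemma in Appendix \ref{app:kingAlgo}). The genuine gap is in convergence, at exactly the step you flagged, and your proposed repair does not close it. You want the correct dictator $d$ to output a ranking refining the joint relation of all pairs fixed by any correct node, but $d$ cannot identify that relation from its local information: a pair $c_i \succ c_j$ fixed by $v$ via $t+1$ propose messages may be backed by a \emph{single} correct proposer (the other $t$ proposes being Byzantine and sent to $v$ only), so it reaches $d$ as one propose message, indistinguishable from a pair proposed once by a Byzantine node; moreover Byzantine nodes can simultaneously send $d$ a single ``propose $c_j \succ c_i$'', so any adjustment rule that honors all once-received proposes faces contradictory constraints, and whichever one $d$ drops can be the one $v$ fixed. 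Concretely: let exactly $n-2t$ correct nodes rank $c_i \succ c_j$, let one correct node $u$ see (with Byzantine help) $n-t$ rankings containing $c_i \succ c_j$ and propose it, and let the Byzantine nodes send $t$ additional proposes to $v$ alone. Then $v$ fixes $c_i \succ c_j$, while a correct dictator $d$ whose own ranking has $c_j \succ c_i$ fixes neither orientation and, under the strategy of Lemma \ref{lem:adjustRanking}, retains $c_j \succ c_i$; node $v$ then rejects $r_d$ in Step \ref{step:adaptDictator} while other correct nodes adopt it, and the good round fails to produce agreement. (A secondary issue: Lemma \ref{lemma:noCondorcet} only excludes $3$-cycles among proposed relations, and your plan would additionally need acyclicity for cycles of arbitrary length, which does not follow immediately at the $n>3t$ bound.)

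The paper's proof of the agreement half avoids this entirely by arguing at the \emph{higher} threshold, in the two-threshold style of the King algorithm: any propose message that some correct node received $n-t$ times includes at least $n-2t \geq t+1$ copies from correct proposers, which were broadcast to everyone, so \emph{every} correct node --- in particular the dictator --- receives at least $t+1$ of them and fixes that pair. Hence a correct dictator automatically agrees with every pair supported at the $n-t$ level, and the compatibility check in the decision phase passes for all correct nodes, which is what makes them all adopt $r_d$ in the good round. In other words, the pairs on which a node may insist against the dictator are those backed by $n-t$ proposes (the ``lock'' threshold), while $t+1$ proposes only trigger the local adjustment; your proof works with the $t+1$-fixed pairs as the check-set, and for that set, as the scenario above shows, the dictator's compliance genuinely fails. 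To complete the proof you must separate the two thresholds as the paper's appendix does and run the quorum-intersection argument at $n-t$.
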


%%%%%%%%%%%%%%%%%%%%%%%%%%%%%%%%%%%%%%%%%%%%%%%%%%%%%%%%%%%%%%%%%%%%%%%%%%%%%%%%%%%%%%%%%%%%%%%%%%
%%%%%%%%    Kemeny and Rank Aggregation   %%%%%%%%%%%%%%%%%%%%%%%%%%%%%%%%%%%%%%%%%%%%%%%%%%%%%%%%
%%%%%%%%%%%%%%%%%%%%%%%%%%%%%%%%%%%%%%%%%%%%%%%%%%%%%%%%%%%%%%%%%%%%%%%%%%%%%%%%%%%%%%%%%%%%%%%%%%

\section{Kemeny Median with Byzantine Nodes}\label{sec:kemeny}

Weakly Paretian voting rules are often not sufficient to pick the most fair ranking from a set of individual preference rankings. In search of the best possible consensus ranking we have to add restrictions on the voting rules without violating the known impossibility results of \cite{Arrow1951}. This leads us to majoritarian SWFs, one of which is the Kemeny rule.
In the following we will introduce the Kemeny rule and use it to derive a better consensus ranking in the presence of Byzantine nodes. Since Byzantine nodes have influence on the final ranking, the corresponding solutions can be qualified with respect to their approximation ratio which we define in Section \ref{sec:approx}. In Section \ref{sec:kemenyLB}, we will derive a lower bound on the approximation ratio of the Kemeny median in the presence of Byzantine nodes and further provide a matching upper bound in Section \ref{sec:kemenyAlgo}.

\begin{definition}[Kendall's $\kdtau$ distance \cite{Kendall}]
The Kendall's $\kdtau$ distance measures the distance between two rankings $r$ and $p$ on candidates $c_1,\ldots,c_m$ by counting pairs of candidates on which they disagree:
$$\kdtau(r, p) \triangleq | \{ (c_i,c_j) \mid c_i \succ_r c_j \hbox{ and } c_j \succ_p c_i \} | .$$
\end{definition}
\noindent This metric $\kdtau$ on ballots can be extended to a distance function between a ranking $r$ and a profile $\mathcal{P}$:
$$\kdtau(r, \mathcal{P}) \triangleq \sum_{p\in \mathcal{P}} \kdtau(r, p).$$

\begin{definition}[Kemeny median]
For a given profile $\mathcal{P}$, the Kemeny median is the ranking $r$ which minimizes $\kdtau(r,\mathcal{P}).$
\end{definition}
The Kemeny median satisfies many nice properties and to some extent guarantees that the chosen ranking is ``fair''. The most prominent quality is probably \textit{monotonicity}:  if voters increase a candidate's preference level, the ranking result either does not change or the promoted choice increases in overall popularity. This quality makes the median solution more robust to Byzantine behavior. The Kemeny rule is also a so called Condorcet method because if there exists a Condorcet winner, i.e., a candidate that wins all pairwise majority comparisons, it will always be ranked as the most popular choice. Besides, it only depends on the number of voters who prefer one alternative over the other and it is reinforcing, meaning that winners which were chosen independently by two different sets of voters will also become winners if the two groups are joined.

%\begin{proposition}[Nice properties of the Kemeny rule]
%majoritarian; something with Condorcet; neutral - is it?
%it is the unique neutral and reinforcing Condorcet extension (Brandt, page 45)
%\end{proposition}

\subsection{Byzantine Setting}\label{sec:approx}

The Kemeny median cannot be computed exactly in the presence of Byzantine nodes since they might suggest rankings which have a large distance to the Kemeny median of the correct nodes thus moving the median preference away from the actual median. A notion for approximate median rankings is therefore introduced as follows:

\begin{definition}[$\alpha$-approximation of Kemeny median]
Let $m$ be a Kemeny median of a preference profile $\mathcal{P}$. An $\alpha$-approximation of $m$ is a preference ranking $m_\alpha$ satisfying $$\kdtau(m_\alpha,\mathcal{P})\leq \alpha\cdot \kdtau(m,\mathcal{P})$$
\end{definition}

As an example consider binary agreement ($m=2$): Here $\kdtau$ counts the number of correct nodes who disagree with the consensus value. Any binary Byzantine agreement algorithm that satisfies All-\hspace{0.05em}Same\hspace{0.1em}-\?Validity will also satisfy $\alpha < n-t-1$.   

Unlike binary agreement, it is not straightforward to see what a Byzantine node would choose as its ranking when the Kemeny rule determines the consensus ranking. Since the input vectors of nodes are rankings, each voter has to propose a strict order between candidates and the corresponding preference relation is transitive. A possible strategy for the Byzantine nodes would then be to choose exactly the opposite ranking of the Kemeny median of all correct nodes. We show in Appendix \ref{app:oppositeRanking} that this strategy works, but such a solution is not unique for most preference profiles. It is therefore difficult for the correct nodes to find out which of the rankings might have been Byzantine. %\dtodo{How does it sound now?}

\subsection{Lower Bounds on the Approximation Ratio}
\label{sec:kemenyLB}
In this section we discuss preference profiles that are vulnerable to Byzantine nodes. The first case is based on the reduction of rankings to binary agreement and gives the highest approximation ratio for $t<n/3$. 
Binary agreement does however assume that there are two groups of voters who completely disagree in their preferences. This is somewhat unlikely in practice when $m$ is sufficiently large. 
 In the second case we therefore exclude such binary instances and provide a lower bound based on Condorcet cycles within a preference profile which converges to the same value for large $m$. The approximation ratio usually depends on the ratio $n/t$, which will be denoted $k$ for the sake of simplicity. 

For our analysis, we represent the preference profile $\mathcal{P}$ as a weighted \emph{tournament graph}, i.e., a graph where the nodes represent the candidates and weighted edges represent how many voters prefer one candidate to the other. The sum of the forward and the backward edge should be equal to the total number of rankings in the corresponding preference profile.  
The ranking of a node is a directed Hamiltonian path following the order of the ranking, and all other edges are derived from the transitivity. 
For any two candidates we call the edge between these candidates a \emph{majority edge} if its backward edge has a smaller weight. The backward edge we then call a \emph{minority edge}. A Kemeny median of a weighted tournament graph is the ranking that minimizes the sum of the weights of all backward edges of the graph. 
Note that rankings restrict the power of Byzantine nodes in the sense that Byzantine nodes can only send transitive tournament graphs where every edge has weight $1$. Using tournament graphs, we can derive a lower bound for the binary case:

\begin{theorem}\label{thm:noCycles}
There is a tournament graph corresponding to a preference profile for which the Byzantine nodes may change the edge weights such that the median of the resulting preference profile is a $\frac{k}{k-2}$-approximation of the optimal median, where $k = n/t$. For $t$ close to $n/3$, this gives a $3$-approximation. 
\end{theorem}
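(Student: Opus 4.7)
The plan is to give a direct construction rather than an indistinguishability argument: I will exhibit a preference profile on $m\geq 2$ candidates together with a Byzantine strategy such that the Kemeny median of the resulting (Byzantine-corrupted) profile lies at Kendall-$\kdtau$ distance $k/(k-2)$ times the optimum from the correct profile. The construction is a clean reduction to the binary case, in which every correct voter picks one of two opposite rankings and the Byzantines use their $t$ ballots to neutralize the majority.

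Concretely, I fix a reference ranking $r=c_1\succ c_2\succ\cdots\succ c_m$, write $r^{\mathrm{op}}$ for its reverse, and assume for convenience that $n$ is even. I place $n/2$ of the correct nodes on $r$ and the remaining $n/2-t$ on $r^{\mathrm{op}}$, so that in the tournament graph every majority edge oriented along $r$ has weight $n/2$ and every minority edge has weight $n/2-t$. Then I let all $t$ Byzantine nodes submit $r^{\mathrm{op}}$, turning the observed tournament into a perfectly balanced graph in which every edge in either direction has weight $n/2$.

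The key tool is the identity $\kdtau(p,r)+\kdtau(p,r^{\mathrm{op}})=\binom{m}{2}$, valid for every ranking $p$. It implies that, on any profile whose rankings are all copies of $r$ or $r^{\mathrm{op}}$, the Kendall-$\kdtau$ cost of $p$ is an affine function of $\kdtau(p,r)$. Hence the optimal Kemeny median of the correct profile is $r$ itself, with cost $(n/2-t)\binom{m}{2}$; and on the observed balanced profile every ranking is simultaneously a Kemeny median, with total cost $(n/2)\binom{m}{2}$ independent of $p$. In particular $r^{\mathrm{op}}$ is a Kemeny median of the observed profile, and forcing the algorithm to return it yields cost $(n/2)\binom{m}{2}$ on the correct profile, giving the approximation ratio $(n/2)/(n/2-t)=n/(n-2t)=k/(k-2)$, which approaches $3$ as $t\to n/3$.

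The only delicate step is the tie at the observed profile: strictly speaking $r^{\mathrm{op}}$ is one of many Kemeny medians, so the conclusion depends on how ties are resolved. I would handle it either by invoking adversarial tie-breaking (which fits the Byzantine model, since the adversary controls the $t$ ballots and may reveal them last) or, to sidestep the ambiguity altogether, by perturbing the counts to $n/2-1$ copies of $r$ and $n/2-t+1$ copies of $r^{\mathrm{op}}$. This makes $r^{\mathrm{op}}$ the unique Kemeny median of the observed profile and yields a ratio of $(n/2-1)/(n/2-t+1)$, which tends to $k/(k-2)$ as $n$ grows. Apart from this bookkeeping, the argument is an immediate calculation using the $\kdtau$-identity above.
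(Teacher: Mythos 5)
Your proposal is correct, and your extremal profile is exactly the one in the paper's Figure \ref{fig:worstCaseExample}: $n/2$ correct nodes on $r$, $n/2-t$ on $r^{\mathrm{op}}$, and $t$ Byzantine copies of $r^{\mathrm{op}}$ producing the balanced tournament with weight $n/2$ in both directions on every pair. Where you diverge is the concluding logic. The paper argues by indistinguishability: the balanced observed profile is equally consistent with the mirrored world ($n/2$ correct nodes on $r^{\mathrm{op}}$, $n/2-t$ on $r$, Byzantines adding $t$ copies of $r$), so the correct nodes cannot determine the actual majority orientation, and whichever extreme ranking they settle on costs $(n/2)\binom{m}{2}$ against the true correct profile in one of the two worlds, versus the optimum $(n/2-t)\binom{m}{2}$. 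You instead reason directly about the Kemeny median of the corrupted profile via the identity $\kdtau(p,r)+\kdtau(p,r^{\mathrm{op}})=\binom{m}{2}$, which buys two things the paper leaves implicit: a rigorous one-line treatment of general $m$ (the paper defers to the figure) and an explicit handling of the tie at the balanced profile, where every ranking is a Kemeny median --- a point the paper's proof silently skips when it asserts the nodes ``might agree on a minority edge.'' What your framing gives up is precisely what indistinguishability supplies: a bound against arbitrary deterministic algorithms rather than against the strategy ``output a median of the observed profile.'' In your perturbed variant ($n/2-1$ copies of $r$, $n/2-t+1$ of $r^{\mathrm{op}}$), nothing prevents an algorithm from stubbornly outputting $r$ and achieving ratio $1$; since Section \ref{sec:kemenyLB} advertises these as lower bounds for any deterministic algorithm, you should add the one sentence your symmetric construction already makes available: the balanced profile arises identically from the mirrored correct profile, so a deterministic algorithm returns the same ranking in both worlds and pays ratio $k/(k-2)$ in one of them whenever it returns $r$ or $r^{\mathrm{op}}$. (Two further small notes: the perturbed variant attains $k/(k-2)$ only asymptotically in $n$, whereas adversarial tie-breaking or the two-world argument gives it exactly; and both your proof and the paper's implicitly assume the output is one of the two extreme rankings --- a ranking at distance $\binom{m}{2}/2$ from each would hedge to ratio $\frac{k-1}{k-2}$ in both worlds --- but closing that gap is not demanded by the theorem as stated, which speaks only of the median of the resulting profile.)
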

\begin{proof}
This tournament graph is equivalent to binary agreement. Consider therefore one pair of candidates: $t$ Byzantine nodes are only able to change the median, i.e., the majority edge, between two candidates if they can swap the majority and minority edge by supporting the minority edge with their ranking.
Assume the worst case, where the forward and the backward edge both have the same weight $n/2$ after the Byzantine nodes have added their preferences. 
In the worst case the tournament graph of correct nodes had the weight $n/2$ for the majority edge. Since the correct nodes will not be able to determine the actual majority edge, they might agree on a minority edge with weight $n/2-t$ instead. The corresponding approximation ratio is then $\frac{n/2}{n/2-t} = \frac{k}{k-2}$.

Figure \ref{fig:worstCaseExample} shows a simple generalization of this argument to $m$ candidates and  proves that the lower bound of $\frac{k}{k-2}$ holds, for all $m$.
\end{proof}

\def \binaryScale  {0.45}
\begin{figure}
\begin{center}
\begin{tikzpicture}[baseline={(0,0)},scale=\binaryScale]
    % initial state
    \node[state,minimum size=-0pt] (C1) at (0,3.8)   {$c_1$};
    \node[state,minimum size=0pt] (C3) at (0,1)  {$c_2$};
    \node[state,minimum size=0pt] (Cm) at (0,-3)  {$c_m$};
    \node[state, fill=black!50, scale=0.1] at (0,-0.5) {};
    \node[state, fill=black!50, scale=0.1] at (0,-1) {};
    \node[state, fill=black!50, scale=0.1] at (0,-1.5) {};
   
    \path (C1) edge  [->,thick,>=latex, bend left, blue]  node [right] {} (C3)
               edge  [->,thick,>=latex, bend left, blue]  node [right] {$\frac{n}{2}$} (Cm)
          (C3) edge  [->,thick,>=latex, bend left, blue]  node [right] {} (Cm);
         
    \path (Cm) edge  [->,thick,>=latex, bend left, red]  node [left] {} (C3)
               edge  [->,thick,>=latex, bend left, red]  node [left] {$\frac{n}{2}-t$} (C1)
          (C3) edge  [->,thick,>=latex, bend left, red]  node [left] {} (C1);
       
  \end{tikzpicture}
  ~\textbf{+}~
  \begin{tikzpicture}[baseline={(0,0)},scale=\binaryScale]

   %Byzantine choices
    \node[state,minimum size=0pt] (F1) at (0,3.8)    {$c_1$};
    \node[state,minimum size=0pt] (F2) at (0,1)   {$c_2$};
    \node[state,minimum size=0pt] (Fm) at (0,-3)  {$c_m$};
    \node[state, fill=black!50, scale=0.1] at (0,-0.5) {};
    \node[state, fill=black!50, scale=0.1] at (0,-1) {};
    \node[state, fill=black!50, scale=0.1] at (0,-1.5) {};
 
    \path (Fm) edge  [->,thick,>=latex, bend left, red]  node [left] {} (F2)
               edge  [->,thick,>=latex, bend left, red]  node [left] {$t$} (F1)
          (F2) edge  [->,thick,>=latex, bend left, red]  node [left] {} (F1);
  
   \end{tikzpicture}
    ~\textbf{=}~
   \begin{tikzpicture}[baseline={(0,0)},scale=\binaryScale]
  
   %Resulting ranking
    \node[state,minimum size=0pt] (FC1) at (0,3.8)    {$c_1$};
    \node[state,minimum size=0pt] (FC2) at (0,1)   {$c_2$};
    \node[state,minimum size=0pt] (FCm) at (0,-3)  {$c_m$};
    \node[state, fill=black!50, scale=0.1] at (0,-0.5) {};
    \node[state, fill=black!50, scale=0.1] at (0,-1) {};
    \node[state, fill=black!50, scale=0.1] at (0,-1.5) {};
   
    \path (FC1) edge  [->,thick,>=latex, bend left, blue]  node [right] {} (FC2)
                edge  [->,thick,>=latex, bend left, blue]  node [right] {$\frac{n}{2}$} (FCm)
          (FC2) edge  [->,thick,>=latex, bend left, blue]  node [right] {} (FCm);
         
    \path (FCm) edge  [->,thick,>=latex, bend left, red]  node [left] {} (FC2)
                edge  [->,thick,>=latex, bend left, red]  node [left] {$\frac{n}{2}$} (FC1)
          (FC2) edge  [->,thick,>=latex, bend left, red]  node [left] {} (FC1);  
    
   \end{tikzpicture}
   ~\textbf{=}~
   \begin{tikzpicture}[baseline={(0,0)},scale=\binaryScale]
   
    % Second possible ranking
    \node[state,minimum size=0pt] (C11) at (0,3.8)    {$c_1$};
    \node[state,minimum size=0pt] (C12) at (0,1)   {$c_2$};
    \node[state,minimum size=0pt] (C1m) at (0,-3)  {$c_m$};
    \node[state, fill=black!50, scale=0.1] at (0,-0.5) {};
    \node[state, fill=black!50, scale=0.1] at (0,-1) {};
    \node[state, fill=black!50, scale=0.1] at (0,-1.5) {};
         
    \path (C11) edge  [->,thick,>=latex, bend left, blue]  node [right] {} (C12)
                edge  [->,thick,>=latex, bend left, blue]  node [right] {$\frac{n}{2}-t$} (C1m)
          (C12) edge  [->,thick,>=latex, bend left, blue]  node [right] {} (C1m);
         
    \path (C1m) edge  [->,thick,>=latex, bend left, red]  node [left] {} (C12)
                edge  [->,thick,>=latex, bend left, red]  node [left] {$\frac{n}{2}$} (C11)
          (C12) edge  [->,thick,>=latex, bend left, red]  node [left] {} (C11);
 
   \end{tikzpicture}
   ~\textbf{+}~
   \begin{tikzpicture}[baseline={(0,0)},scale=\binaryScale]     
   
    %Byzantine choices
    \node[state,minimum size=0pt] (F11) at (0,3.8)    {$c_1$};
    \node[state,minimum size=0pt] (F12) at (0,1)   {$c_2$};
    \node[state,minimum size=0pt] (F1m) at (0,-3)  {$c_m$};
    \node[state, fill=black!50, scale=0.1] at (0,-0.5) {};
    \node[state, fill=black!50, scale=0.1] at (0,-1) {};
    \node[state, fill=black!50, scale=0.1] at (0,-1.5) {};
 
    \path (F11) edge  [->,thick,>=latex, bend left, blue]  node [right] {} (F12)
                edge  [->,thick,>=latex, bend left, blue]  node [right] {$t$} (F1m)
          (F12) edge  [->,thick,>=latex, bend left, blue]  node [right] {} (F1m);
        
\end{tikzpicture}
\end{center}
\label{fig:worstCaseExample}
\caption{Two indistinguishable views on $m$ candidates for binary relations}
\medskip
\small
Note that the labels of the edges correspond to all edges in the same color. The left tournament graph is reached if $n/2$ nodes choose the order $c_1 \succ c_2 \succ \ldots\succ c_m$ and $n/2-t$ nodes choose $c_m \succ c_{m-1} \succ \ldots \succ c_1$. The right profile can be reached from a profile where $n/2$ nodes choose the order $c_m \succ c_{m-1} \succ \ldots \succ c_1$ and $n/2-t$ nodes choose $c_1 \succ c_2 \succ \ldots\succ c_m$. The Byzantine nodes can make all correct nodes see the tournament graph in the center by adding $t$ preference vectors $c_m \succ c_{m-1} \succ \ldots \succ c_1$ or $c_1 \succ c_2 \succ \ldots\succ c_m$ respectively.
\end{figure}
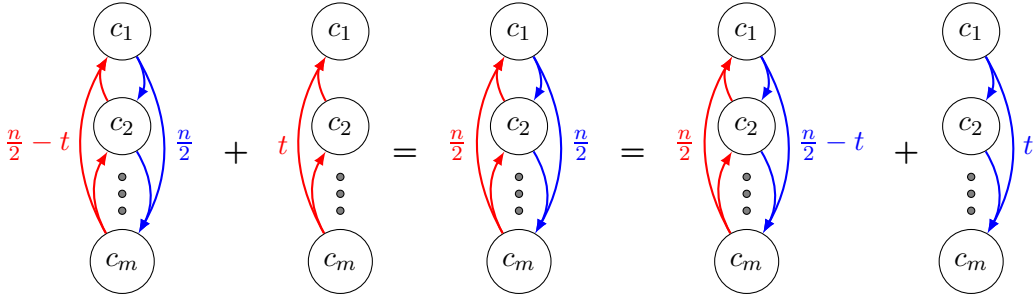

Now we present another lower bound using Condorcet cycles which can result in ambiguous views as well. Here we assume that every majority edge has a weight of more than $n/2$, thus discarding the possibility to reduce any pair of forward and backward edges in the tournament graph to binary agreement. The main difficulty in finding a good example comes from the fact that not every tournament graph has an underlying preference profile. In Appendix \ref{app:LBcycles} we discuss the necessary properties that preference profiles induce on tournament graphs and show how the best lower bound can be derived. The next theorem present the best lower bound for cycles and its generalization to $m$ candidates.

\begin{theorem} \label{thm:cycles}
By modifying directed majority cycles in the tournament graph, Byzantine nodes can increase the approximation ratio by a factor of at most $5/4$. 
\end{theorem}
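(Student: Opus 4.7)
The plan is to show that among all directed majority cycles in the tournament graph, the worst case is a $3$-cycle, and that even there Byzantine nodes can push the approximation ratio no higher than $5/4$. The cost of every ranking is measured against the correct-nodes preference profile, and I restrict attention to the pairs of candidates lying on the cycle, since all other pairs are unaffected by a purely cyclic modification.

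Consider a directed $3$-cycle $c_1\to c_2\to c_3\to c_1$ and let $a\le b\le c$ denote its correct-profile majority-direction weights, each strictly above $n/2$ by the running assumption of the subsection. Any ranking restricted to $\{c_1,c_2,c_3\}$ reverses exactly one cycle edge: the true Kemeny median reverses the lightest (contributing $a+(n_c-b)+(n_c-c)$ on these three pairs, with $n_c:=n-t$), while the most harmful Byzantine-induced ranking reverses the heaviest (contributing $(n_c-a)+(n_c-b)+c$). The ratio to be bounded is therefore $R=(2n_c-a-b+c)/(a+2n_c-b-c)$.

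The main work is to combine two constraints. \emph{Attack feasibility}: each Byzantine voter imposes a linear order on the three candidates, and enumerating the six permutations shows that none simultaneously boosts all three cycle-direction weights; using only rankings that avoid the $(c_3,c_1)$ edge, Byzantine can lift $a$ and $b$ each by up to $t$ while leaving $c$ fixed, so forcing the algorithm to switch the reversed edge requires $c-a\le t$. \emph{Profile realizability}: the same permutation enumeration applied to the correct voters yields $a+b+c\le 2n_c$, and together with $a>n/2$ this forces $3a\le 2n_c$. A short derivative check shows $R$ is maximized at $b=c$; setting $s:=c-a$ and $b=c=a+s$ reduces $R$ to $(2n_c+s)/(2n_c-2s)$, while realizability plus $a>n/2$ translates into $s\le n/4-t$ and feasibility into $s\le t$.

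The hard part is the joint optimization of $R$ in $(t,s)$. The two upper bounds on $s$ cross exactly at $t=s=n/8$, and at that point $R=5/4$; at every other $(t,s)$ one of them is strictly binding and $R<5/4$, which pins down $5/4$ as the tight value for $3$-cycles. For a directed majority cycle of length $\ell\ge 4$ the true optimum still reverses a single cycle edge, but Byzantine must now relocate the minimum among $\ell$ weights under the same permutation-counting obstructions, and a direct extension of the $3$-cycle bookkeeping yields a strictly smaller ratio. The detailed permutation count behind $a+b+c\le 2n_c$, the monotonicity argument sending $b$ to $c$, and the reduction from longer cycles are collected in Appendix~\ref{app:LBcycles}.
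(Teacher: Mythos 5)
Your reading of the statement as a worst-case optimization matches the paper's Appendix~\ref{app:LBcycles} rather than the in-text proof (your $a,b,c$ are its $z,y,x$; your constraints $c-a\le t$ and $a+b+c\le 2n_c$ are its $x-z\le t$ and the directed triangle inequality), but your key reduction is wrong. The claim that $R$ is maximized at $b=c$ comes from a derivative in $b$ with $a,c$ held fixed, yet at the optimum the realizability constraint $a+b+c\le 2n_c$ binds, so $b$ cannot be increased independently; the true optimum of your own program has $b<c$. The paper's appendix solves exactly this program and obtains, for $k=n/t\in(6,8]$, the point $c=n/2+t$, $b=n-3t$, $a=n/2$ with value $1+2/k>5/4$, and for $k\in[4,6]$ the point $c=n-2t$, $b=a=n/2$ with value $2-4/k$, up to $4/3$. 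A fully concrete counterexample to your conclusion: $n=140$, $t=20$ ($k=7$), with $48$ correct voters ranking $c_1\succ c_2\succ c_3$, $28$ ranking $c_2\succ c_3\succ c_1$, and $44$ ranking $c_3\succ c_1\succ c_2$, giving $(c,b,a)=(92,76,72)$, all strictly above $n/2=70$ and $c-a=t$. The true median $c_1\succ c_2\succ c_3$ costs $144$; after $t$ Byzantine rankings $c_2\succ c_3\succ c_1$ are added, $c_2\succ c_3\succ c_1$ becomes a Kemeny median of the corrupted profile and costs $184$ against the correct profile --- ratio $23/18>5/4$. So your assertion that $R<5/4$ everywhere except $t=s=n/8$ is false, and indeed the flat ``$\le 5/4$ for all $t<n/3$'' reading cannot be proved: the paper's in-text proof sidesteps this by explicitly restricting to $k>8$, where its construction yields $(k+2)/k<5/4$ --- a restriction your argument never derives.

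Two further gaps. First, you only allow the adversary to force reversal of the \emph{heaviest} cycle edge. The appendix also optimizes the ranking reversing the \emph{middle} edge (condition $y-z\le t$), which for $k\ge 8$ attains $1+2/k=(k+2)/k$, strictly exceeding your claimed maximum $(2k-1)/(2k-4)$ for $k>8$; this middle-edge case is precisely what the paper's actual proof realizes, not by optimization but by an explicit pair of indistinguishable views (two correct profiles on a $3$-cycle differing only in which edge the $t$ Byzantine voters supported). So even where your numbers stay below $5/4$ they understate the worst case and miss the construction making $5/4$ tight as $k\to 8$. Second, your treatment of longer cycles is asserted rather than proved: a ranking restricted to an $\ell$-cycle also scores the $\binom{\ell}{2}-\ell$ chord pairs and can reverse up to $\ell-1$ cycle edges, so ``the true optimum still reverses a single cycle edge'' needs an argument. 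The paper instead handles general $m$ by embedding the $3$-cycle in a profile whose remaining $m-2$ candidates are ordered by $n-t$ voters, computing the ratio $\frac{2+(m-2)k}{2+(m-2)(k-2)}\approx \frac{k}{k-2}$ for large $m$ --- which, note, also exceeds $5/4$ for $k$ near $8$, another sign that the headline constant is tied to the specific $3$-candidate, $k>8$ regime rather than to a uniform upper bound of the kind you set out to prove.
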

\begin{proof}
We start by considering a tournament graph formed by one directed cycle of candidates $c_1,\ c_2,\ c_3$, i.e., one directed cycle formed by majority edges. 
Assume all correct nodes receive a view where $n-2t$ nodes prefer $c_1$ to $c_2$, i.e., $(c_1, c_2)$ is a majority edge; $n/2 + t$ nodes prefer $c_2$ to $c_3$ and $n/2+t$ nodes prefer $c_3$ to $c_1$. 
For $n>8f$ or equivalently $k>8$, the edge $(c_1,c_2)$ is in the median ranking of all nodes. Since the edges $(c_2,c_3)$ and $(c_3,c_1)$ cannot be both in the median ranking, the nodes have to decide for one of the rankings. In the worst case, one of these two edges was supported by all $t$ Byzantine nodes while the other edge was not supported by any Byzantine node. This leads to two views which are not distinguishable for the correct nodes, as shown in Figure \ref{fig:worstCaseExampleCircle}.
The approximation ratio for these views is $\frac{2t+n/2-t+n/2+t}{2t+n/2-2t+n/2}= \frac{n+2t}{n} = \frac{k+2}{k} < \frac{5}{4}$.

An extension to $m$ candidates gives an approximation ratio of
$\frac{2t + (m-2)\cdot(n/2-t) + (m-2)\cdot(n/2+t)}{2t + (m-2)\cdot n/2 + (m-2)\cdot(n/2-2t)} = \frac{2t+(m-2)\cdot n}{2t+(m-2)\cdot(n-2t)} = \frac{2+(m-2)k}{2+(m-2)\cdot(k-2)} \approx \frac{k}{k-2}$ for large $m$.
\end{proof}

\def \triangleScale {0.45}
\begin{figure}
\begin{center}
\begin{tikzpicture}[baseline={(0,2)},scale=\triangleScale]
    % initial state
    \node[state,minimum size=0pt] (C1) at (4,-2)   {$c_1$};
    \node[state,minimum size=0pt] (C2) at (-4,-2)  {$c_2$};
    \node[state,minimum size=0pt] (C3) at (0,2)      {$c_3$};
    \node[state,minimum size=0pt] (C4) at (0,5.2)      {$c_4$};
    \node[state,minimum size=0pt] (Cm) at (0,8.4)     {$c_m$};
    \node[state, shape=ellipse,minimum height =3.9cm, minimum width=3.1cm ] (x) at (-0.05,5.2) {};
    \node[state, fill=black!50, scale=0.1] at (0,6.4) {};
    \node[state, fill=black!50, scale=0.1] at (0,6.8) {};
    \node[state, fill=black!50, scale=0.1] at (0,7.2) {};

    \path (C1) edge  [->,thick,>=latex, bend left]  node [below]  {$n-3t$}   (C2)
          (C2.west) edge  [->,thick,>=latex, bend left]  node[left] {$\frac{n}{2}+t$}  (x.west)
          (x.east) edge  [->,thick,>=latex, bend left]  node [right]  {$\frac{n}{2}$}    (C1.east)
          (C1) edge  [->,thick,>=latex, bend right]  node [left,yshift=-6pt]  {$\frac{n}{2}-t$}  (x)
          (x) edge  [->,thick,>=latex, bend right]  node [right, xshift= -1pt,yshift=-6pt] {$\frac{n}{2}-2t$} (C2)
          (C2) edge  [->,thick,>=latex, bend left]  node [below]   {$2t$}     (C1)
         
          (Cm) edge  [->,thick,>=latex, bend left]  node [right] {$n-t$} (C4)
          (Cm) edge  [->,thick,>=latex, bend right]  node [left] {$n-t$} (C3)
          (C4) edge  [->,thick,>=latex, bend left]  node [right] {$n-t$} (C3);
\end{tikzpicture}
~$\mathbf{\longleftrightarrow}$~
\begin{tikzpicture}[baseline={(0,2)},scale=\triangleScale]
    % initial state
    \node[state,minimum size=0pt] (C1) at (4,-2)   {$c_1$};
    \node[state,minimum size=0pt] (C2) at (-4,-2)  {$c_2$};
    \node[state,minimum size=0pt] (C3) at (0,2)      {$c_3$};
    \node[state,minimum size=0pt] (C4) at (0,5.2)      {$c_4$};
    \node[state,minimum size=0pt] (Cm) at (0,8.4)     {$c_m$};
    \node[state, shape=ellipse,minimum height =3.9cm, minimum width=3.1cm ] (x) at (-0.05,5.2) {};
    \node[state, fill=black!50, scale=0.1] at (0,6.4) {};
    \node[state, fill=black!50, scale=0.1] at (0,6.8) {};
    \node[state, fill=black!50, scale=0.1] at (0,7.2) {};

    \path (C1) edge  [->,thick,>=latex, bend left]  node [below]  {$n-3t$}   (C2)
          (C2.west) edge  [->,thick,>=latex, bend left]  node[left]  {$\frac{n}{2}$}  (x.west)
          (x.east) edge  [->,thick,>=latex, bend left]  node [right]  {$\frac{n}{2}+t$}    (C1.east)
          (C1) edge  [->,thick,>=latex, bend right]  node [left,yshift=-6pt]  {$\frac{n}{2}-2t$}  (x)
          (x) edge  [->,thick,>=latex, bend right]  node [right,xshift=-1pt,yshift=-6pt]  {$\frac{n}{2}-t$} (C2)
          (C2) edge  [->,thick,>=latex, bend left]  node [below]   {$2t$}     (C1)
         
          (Cm) edge  [->,thick,>=latex, bend left]  node [right] {$n-t$} (C4)
          (Cm) edge  [->,thick,>=latex, bend right]  node [left] {$n-t$} (C3)
          (C4) edge  [->,thick,>=latex, bend left]  node [right] {$n-t$} (C3);
\end{tikzpicture}
\end{center}
\label{fig:worstCaseExampleCircle}
\caption{Two indistinguishable views on $m$ candidates for directed cycles}
\medskip
\small
We have two views which show the profiles of correct nodes only. The left tournament graph results from a profile where $n/2-t$ nodes choose $c_1 \succ c_2 \succ c_m \succ \ldots \succ c_3$, $n/2-2t$ nodes choose $c_m \succ \ldots \succ c_3 \succ c_1 \succ c_2$ and $2t$ nodes choose $c_2 \succ c_m \succ \ldots \succ c_3 \succ c_1$. 
The right tournament graph results from $n/2-2t$ nodes choosing $c_1 \succ c_2 \succ c_m \succ \ldots \succ c_3$, $n/2-t$ nodes choosing $c_m \succ \ldots \succ c_3 \succ c_1 \succ c_2$ and $2t$ nodes choosing $c_2 \succ c_m \succ \ldots \succ c_3 \succ c_1$. 
If the Byzantine nodes add $t$ profiles $c_m \succ \ldots \succ c_3 \succ c_1 \succ c_2$ to the left view, and $t$ profiles $c_1 \succ c_2 \succ c_m \succ \ldots \succ c_3$ to the right view, the resulting profiles become indistinguishable to the correct nodes. 
\end{figure}
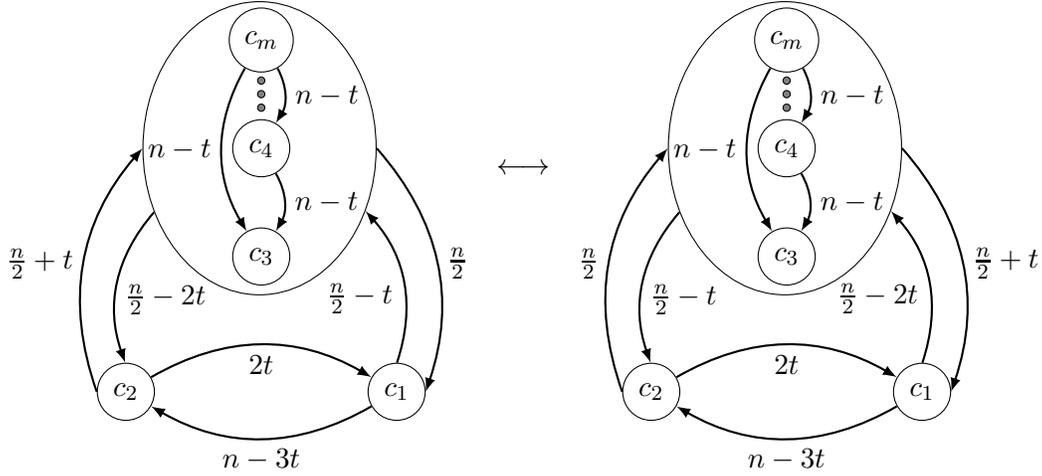

The received approximation ratio converges to the same approximation ratio as in the binary case for large $m$.

\subsection{Algorithm for Kemeny Median Approximation}\label{sec:kemenyAlgo}

In this section we present a synchronous algorithm for computing a consensus median which matches the lower bound on the approximation ratio presented in the previous section.
A simple idea is to use interactive consistency \cite{BrachaRB,TouegRB}: For $t+1$ rounds, the nodes exchange all information they have received until that round and after the $(t+1)$-st round they compute the Kemeny median from a set of rankings which they have received often enough. This algorithm guarantees that the set of rankings will be the same for each node and therefore all nodes will decide on the same ranking. The main drawback of interactive consistency is that it has a large message complexity. Since each of the correct rankings will be forwarded by each of the correct nodes, this message complexity is in $\Theta(mn^t)$.

Instead of exchanging large amounts of information, we can directly exploit the fact that the Byzantine nodes cannot change a Kemeny median of the preference profile of the correct nodes by more than a transitive tournament graph with edge weights $t$. The corresponding strategy is presented in Algorithm \ref{alg:kemenyMedianOpt}.

\begin{algorithm}
   \begin{algorithmic}[1]
   \Statex Every node $v$ executes the following algorithm
        \State broadcast own ranking $r_v$
        \State compute the Kemeny median of the received preference profile, call it $m_v$ \label{step:computeMedian}
        \State apply Algorithm \ref{alg:king} with $m_v$ as an input value \label{step:paretoOnMedian}
        
   \end{algorithmic}
   \caption{Byzantine agreement protocol for the Kemeny median (for $t < n/3$)}
   \label{alg:kemenyMedianOpt}
\end{algorithm}

The presented algorithm has the same order of round and message complexity as Algorithm \ref{alg:king}.

\begin{theorem}\label{thm:correctnessAlg2}
Algorithm \ref{alg:kemenyMedianOpt} terminates within $t+3$ rounds exchanging $O(tn^2m\log{m})$ messages. The computed consensus ranking satisfies the lower bounds from Section \ref{sec:kemenyLB} and Pareto\,-\?Validity.
\end{theorem}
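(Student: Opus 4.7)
My plan is to verify the three deliverables in turn: the round and message bounds, agreement plus Pareto\,-\?Validity for the original inputs, and the matching approximation ratio. The first two follow quickly from what has already been established in the paper; the real content is the third.

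For complexity, I would observe that Step~1 of Algorithm~\ref{alg:kemenyMedianOpt} is a single all-to-all broadcast of input rankings, Step~2 is purely local, and Step~3 invokes Algorithm~\ref{alg:king}, which runs in $t+1$ phases by Theorem~\ref{thm:kingAlgo}; summing yields at most $t+3$ communication rounds. Each transmitted item is a ranking or a per-pair ``propose'' tag, encodable in $O(m\log m)$ bits, and at most $O(n^2)$ of them are sent per round, so the message bound $O(tn^2m\log m)$ follows. Agreement itself is immediate from Theorem~\ref{thm:kingAlgo} applied to Step~3. For Pareto\,-\?Validity against the original profile $\mathcal{P}_C$ I plan a double application of Theorem~\ref{thm:kingAlgo}: if every correct node orders $c_i\succ c_j$ in its input, then each correct view $\mathcal{P}_v$ contains at least $n-t$ rankings with $c_i\succ c_j$ and at most $t$ with the reverse, so swapping the pair in any ranking strictly increases its Kendall distance to $\mathcal{P}_v$ (by at least $n-3t>0$). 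Every local median $m_v$ therefore orients the pair as $c_i\succ c_j$, and a second application of Theorem~\ref{thm:kingAlgo} to the $m_v$'s carries this orientation into the final ranking $r$.

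For the approximation ratio I plan a pair-by-pair argument on the tournament-graph representation of $\mathcal{P}_C$. Fix a pair $(c_i,c_j)$ with correct-side weights $w_{ij}\ge w_{ji}$. If the gap $w_{ij}-w_{ji}$ exceeds $t$ then, even when all $t$ Byzantine rankings support the minority, no view $\mathcal{P}_v$ can reverse the pair; every $m_v$ and hence $r$ orients it as $c_i\succ c_j$, matching the optimum at no cost. Otherwise $w_{ji}\ge w_{ij}-t$, so in the worst case where $r$ does reverse the pair its contribution to $\tau(r,\mathcal{P}_C)$ is $w_{ij}\le w_{ji}\cdot k/(k-2)$, matching Theorem~\ref{thm:noCycles}. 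The parallel analysis on a directed triangle of majority edges, where the adversary's $t$-budget can only rotate the cycle's weakest edge, reproduces the Condorcet-cycle ratio $\tfrac{2+(m-2)k}{2+(m-2)(k-2)}$ of Theorem~\ref{thm:cycles}.

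The main obstacle I anticipate is the global accounting in this last argument. Since $r$ is not itself a Kemeny median of any single profile but the output of Algorithm~\ref{alg:king} on the possibly distinct $m_v$'s, I cannot invoke a clean global optimality inequality and must instead combine Pareto\,-\?Validity among the $m_v$'s with the structural fact that the adversary's $t$ rankings can reverse a pair only when its correct-side gap is at most $t$. Verifying that the per-pair and per-cycle blow-ups, once composed through the full consensus ranking, settle exactly at the lower bounds of Section~\ref{sec:kemenyLB}---rather than multiplying into something strictly larger---is where the real work of the proof should lie.
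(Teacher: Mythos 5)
Your round and message accounting, and your two-layer skeleton (local Kemeny medians $m_v$, then Step \ref{step:paretoOnMedian} of Algorithm \ref{alg:kemenyMedianOpt} via Algorithm \ref{alg:king}, with Pareto\,-\?Validity transporting properties of the $m_v$'s into the output) coincide with the paper's proof. But two of your steps fail. For Pareto\,-\?Validity you argue that swapping the pair in any ranking strictly increases its Kendall distance by at least $n-3t$; this is only valid when $c_i$ and $c_j$ are adjacent in the ranking being modified. If they are separated by intermediate candidates, transposing them also flips every pair formed with those intermediate candidates, and each such pair can worsen the distance substantially, so the one-pair calculation does not show that a local median must orient $(c_i,c_j)$ in the unanimous direction. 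The paper closes this hole structurally (Appendix \ref{app:paretoOptAlgo}): an edge of weight $n-t$ lies in a Kemeny median because, by the triangle inequality on profile weights, the correct nodes alone cannot produce a directed cycle whose majority edges all weigh $n-2t>(n-t)/2$, and a median reverses a heavy edge only through such a cycle.

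The more serious gap is your per-pair premise for the approximation ratio: that a pair whose correct-side gap exceeds $t$ is oriented with the majority ``in every $m_v$ and hence in $r$''. Adding $t$ Byzantine rankings indeed cannot flip which direction is the majority when the gap exceeds $t$, but a Kemeny median is not bound to follow every majority edge --- that is the whole Condorcet-cycle phenomenon. In the paper's own lower-bound construction (Theorem \ref{thm:cycles}, Figure \ref{fig:worstCaseExampleCircle}), the pair $(c_2,c_3)$ has correct weights $n/2+t$ against $n/2-t$, a gap of $2t>t$, and yet correct nodes can be steered into a median reversing it: within a majority cycle the adversary manipulates \emph{which} cycle edge gets reversed, and the relevant conditions are of the form $x-z\le t$ on differences of two distinct majority-edge weights (Appendix \ref{app:LBcycles}), not on one pair's forward/backward gap. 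Hence your bookkeeping $w_{ij}\le w_{ji}\cdot k/(k-2)$ does not cover the cycle case, and the composition step you explicitly defer at the end is precisely the content of the paper's Lemma \ref{lem:optApprox}: the case analysis there runs on the profile of the correct medians $m_v$ --- if they all agree on an edge, Pareto\,-\?Validity of Algorithm \ref{alg:king} fixes it and Lemma \ref{lem:localMedians} bounds its cost; if they disagree, then either both orientations of a pair were chosen by correct medians (binary case, both within the bound), or the medians themselves contain a directed $3$-cycle, which forces the forward and backward weights of all three involved pairs in the original profile to differ by at most $t$, again reducing to the binary bound. Without this argument, or an equivalent replacement for it, the approximation claim of the theorem remains unproven in your proposal.
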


In the following we give two lemmas proving the correctness of this algorithm. A full proof of Theorem \ref{thm:correctnessAlg2} is provided in Appendix \ref{app:paretoOptAlgo}.

%trivial part
\begin{lemma}\label{lem:localMedians}
In Step \ref{step:computeMedian} of Algorithm \ref{alg:kemenyMedianOpt}, every correct node chooses a median ranking that matches the bounds from Section \ref{sec:kemenyLB}.
\end{lemma}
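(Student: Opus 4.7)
The plan is to combine optimality of $m_v$ on the observed profile with the fact that Kendall's $\kdtau$ is a metric on total orders, hence satisfies the triangle inequality on any three rankings. Let $\mathcal{P}_C$ denote the multiset of the $n-t$ correct rankings and $\mathcal{P}_B$ the at most $t$ rankings broadcast by Byzantine nodes, so the profile that a correct node $v$ sees in Step~\ref{step:computeMedian} is $\mathcal{P}_v = \mathcal{P}_C \cup \mathcal{P}_B$. Let $m^*$ be a true Kemeny median of $\mathcal{P}_C$ and write $\mathrm{OPT} = \kdtau(m^*,\mathcal{P}_C)$, which is the quantity against which the approximation ratio is measured.

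First I would derive an \emph{upper} bound on $\kdtau(m_v,\mathcal{P}_C)$ from optimality of $m_v$ on $\mathcal{P}_v$,
\[ \kdtau(m_v,\mathcal{P}_C) + \kdtau(m_v,\mathcal{P}_B) \;\le\; \kdtau(m^*,\mathcal{P}_C) + \kdtau(m^*,\mathcal{P}_B). \]
Applying the triangle inequality $\kdtau(m^*,p) - \kdtau(m_v,p) \le \kdtau(m^*,m_v)$ to each $p\in\mathcal{P}_B$, summing over the at most $t$ Byzantine rankings and substituting yields
\[ \kdtau(m_v,\mathcal{P}_C) \;\le\; \mathrm{OPT} + t\cdot \kdtau(m^*,m_v). \]
Next I would derive a matching \emph{lower} bound on the same quantity using the reverse triangle inequality $\kdtau(m_v,p) \ge \kdtau(m^*,m_v) - \kdtau(m^*,p)$; summing over the $n-t$ correct rankings gives
\[ \kdtau(m_v,\mathcal{P}_C) \;\ge\; (n-t)\kdtau(m^*,m_v) - \mathrm{OPT}. \]

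Combining the two inequalities eliminates $\kdtau(m_v,\mathcal{P}_C)$ and leaves $(n-2t)\kdtau(m^*,m_v) \le 2\cdot\mathrm{OPT}$, which is exactly where the hypothesis $t<n/3$ is used, since a positive and linear slack $n-2t$ is needed to close the argument. Substituting this bound on $\kdtau(m^*,m_v)$ back into the upper bound yields
\[ \kdtau(m_v,\mathcal{P}_C) \;\le\; \Bigl(1 + \tfrac{2t}{n-2t}\Bigr)\mathrm{OPT} \;=\; \tfrac{n}{n-2t}\,\mathrm{OPT} \;=\; \tfrac{k}{k-2}\,\mathrm{OPT}, \]
which matches the binary lower bound of Theorem~\ref{thm:noCycles} and the $m\to\infty$ limit of the cycle bound of Theorem~\ref{thm:cycles}. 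The main obstacle I anticipate is spotting the lower bound step: one must recognise that even though $m^*$ is never explicitly computed by $v$, the correct profile is ``concentrated'' around $m^*$ with total displacement $\mathrm{OPT}$, and the triangle inequality then forces $m_v$ to pay roughly $(n-t)\kdtau(m^*,m_v)$ in distance to $\mathcal{P}_C$ unless it stays close to $m^*$. Everything else is two triangle-inequality sums and one line of algebra.
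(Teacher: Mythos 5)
Your proof is correct, and it takes a genuinely different route from the paper. The paper's own proof of this lemma is a one-line reduction: it observes that the profile a single correct node sees in Step 2 (all $n-t$ correct rankings plus at most $t$ adversarial ones) is exactly the adversarial setting analyzed in Section 5.2, and asserts that the Byzantine nodes ``can in the worst case only reach the lower bound \ldots but not exceed it'' --- implicitly taking for granted that the lower-bound constructions there are worst cases, without a general argument. You instead give a self-contained metric-space argument: optimality of $m_v$ on the observed profile, plus two triangle-inequality sums, yields the proximity bound $(n-2t)\,\tau(m^*,m_v)\le 2\cdot\mathrm{OPT}$ and hence the universal guarantee $\tau(m_v,\mathcal{P}_C)\le \frac{n}{n-2t}\,\mathrm{OPT}=\frac{k}{k-2}\,\mathrm{OPT}$, which is exactly the worst of the Section 5.2 bounds (the binary one, and the $m\to\infty$ limit of the cycle bound). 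Your version buys rigor the paper lacks --- it actually \emph{proves} that no adversarial strategy can exceed $\frac{k}{k-2}$, rather than asserting it --- and the intermediate bound $\tau(m^*,m_v)\le \frac{2\,\mathrm{OPT}}{n-2t}$ is a reusable fact about how far any local median can drift from the true one. Two minor caveats: first, your argument only establishes the instance-independent bound $\frac{k}{k-2}$; for the restricted non-binary instances of Theorem~\ref{thm:cycles} it does not recover the tighter class-specific ratio $\frac{2+(m-2)k}{2+(m-2)(k-2)}$, though since $\frac{k}{k-2}$ dominates every bound in Section~\ref{sec:kemenyLB} this matches the lemma as the paper actually uses it (and the paper's own proof is no more precise on this point). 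Second, your bound silently uses that $m_v$ is an \emph{exact} Kemeny median of the received profile, which Step~\ref{step:computeMedian} does prescribe but which is NP-hard to compute; this is a limitation shared with the paper, not a gap in your reasoning.
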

\begin{proof}
Instead of all nodes in the previous section, we can consider that the Byzantine nodes change just one node's view. Since the number of Byzantine nodes remains the same and the rankings of all correct nodes are received by every node in the synchronous communication model, the Byzantine nodes can in the worst case only reach the lower bound for any correct node, but not exceed it.
\end{proof}

\begin{lemma}\label{lem:optApprox}
The computed median ranking by Step \ref{step:paretoOnMedian} of the algorithm satisfies the approximation ratios from Section \ref{sec:kemenyLB}. 
\end{lemma}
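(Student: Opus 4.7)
The plan is to combine the correctness of Algorithm~\ref{alg:king} (Theorem~\ref{thm:kingAlgo}) with the per-node approximation guarantee of Lemma~\ref{lem:localMedians}. By Theorem~\ref{thm:kingAlgo} applied in Step~\ref{step:paretoOnMedian}, every correct node terminates with the same consensus ranking $r$, and $r$ satisfies Pareto\,-\?Validity relative to the local Kemeny medians $\{m_v\}$ used as inputs: whenever every correct $m_v$ orders $c_i \succ c_j$, so does $r$. By Lemma~\ref{lem:localMedians}, each $m_v$ already satisfies $\kdtau(m_v, \mathcal{P}_C) \leq \alpha \cdot \kdtau(m^*, \mathcal{P}_C)$, where $\mathcal{P}_C$ is the correct preference profile, $m^*$ a true Kemeny median, and $\alpha$ the worst-case ratio from Section~\ref{sec:kemenyLB}. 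What remains is to transfer this approximation from the $m_v$'s to their Pareto reconciliation $r$.

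I would carry out the transfer by decomposing $\kdtau(r, \mathcal{P}_C) = \sum_{(c_i,c_j)} d_{ij}(r)$, where $d_{ij}(r)$ counts rankings in $\mathcal{P}_C$ disagreeing with $r$ on the pair $(c_i,c_j)$, and analysing two classes of pairs separately. On pairs ordered unanimously by the correct $m_v$, Pareto\,-\?Validity pins $r$ to that common orientation, so $d_{ij}(r)=d_{ij}(m_v)$ for every correct $v$, and the per-pair cost is exactly what was already charged to the local medians. On pairs where the $m_v$ disagree, $r$ may adopt either orientation; the key observation is that such a pair is precisely the Byzantine-induced ambiguity exploited by the lower-bound constructions of Theorems~\ref{thm:noCycles} and~\ref{thm:cycles}, because two correct nodes can only arrive at opposite local medians on a pair whose orientation is swingable by the $t$ Byzantine rankings in their respective $\mathcal{P}_v$.

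The cleanest way to formalize the second case is an indistinguishability argument: for the consensus $r$ produced by Algorithm~\ref{alg:kemenyMedianOpt}, I would construct a Byzantine extension $\mathcal{B}$ with $|\mathcal{B}|\leq t$ such that $r$ is itself a Kemeny median of $\mathcal{P}_C \cup \mathcal{B}$. Applying Lemma~\ref{lem:localMedians} to this surrogate profile then yields $\kdtau(r, \mathcal{P}_C) \leq \alpha \cdot \kdtau(m^*, \mathcal{P}_C)$ directly. The main obstacle is assembling a single coherent $\mathcal{B}$ that simultaneously explains $r$'s choice on every Pareto-ambiguous pair without the $t$ auxiliary rankings inducing a Condorcet cycle among the flipped relations. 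The counting argument of Lemma~\ref{lemma:noCondorcet} suggests the right tool: since $r$ is itself a ranking and hence acyclic, and each individual pairwise flip is witnessed locally by some $\mathcal{P}_v$, the flips can be stitched into at most $t$ full rankings forming $\mathcal{B}$ in a way that is guaranteed to avoid any majority cycle, which closes the argument.
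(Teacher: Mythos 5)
Your plan founders at its crux: the claim that for the consensus ranking $r$ one can always assemble a Byzantine extension $\mathcal{B}$ with $|\mathcal{B}|\leq t$ such that $r$ is \emph{exactly} a Kemeny median of $\mathcal{P}_C\cup\mathcal{B}$ is not merely unproven---it is false. Since Kendall's $\tau$ is a metric, any single added ranking $b$ satisfies $\tau(m',b)-\tau(r,b)\leq \tau(m',r)$, so $t$ added rankings can lower $r$'s Kemeny cost relative to a competitor $m'$ by at most $t\cdot\tau(m',r)$; hence $r$ is a median of some $t$-padded profile only if $\tau(r,\mathcal{P}_C)-\tau(m',\mathcal{P}_C)\leq t\cdot\tau(m',r)$ for every ranking $m'$. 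The algorithm can violate this. Take $\mathcal{P}_C$ to consist of $(n-t)/3$ copies each of $c_1\succ c_2\succ c_3$, $c_2\succ c_3\succ c_1$ and $c_3\succ c_1\succ c_2$ (a maximal Condorcet cycle, all three majority weights $2(n-t)/3$): all three rotations tie as Kemeny medians at cost $4(n-t)/3$, so by equivocating, Byzantine nodes can make half the correct nodes compute $m_v=c_1\succ c_2\succ c_3$ and the other half $m_v=c_2\succ c_3\succ c_1$. The only unanimous pair among these medians is $c_2\succ c_3$, so in Step~\ref{step:paretoOnMedian} only that pair gets fixed, and a Byzantine dictator in Algorithm~\ref{alg:king} can legally impose $r=c_2\succ c_1\succ c_3$, which breaks \emph{two} majority edges and costs $5(n-t)/3$. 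Against $m'=c_1\succ c_2\succ c_3$ we have $\tau(m',r)=1$ but a cost excess of $(n-t)/3\gg t$, so no surrogate $\mathcal{B}$ exists---yet $r$ still meets the $5/4$ ratio of Theorem~\ref{thm:cycles}. The paper itself flags this phenomenon right after the lemma: the consensus can be farther from the correct profile than \emph{every} $m_v$, which already rules out reducing $r$ to a surrogate local median. Your appeal to acyclicity and Lemma~\ref{lemma:noCondorcet} only addresses combinatorial consistency (indeed $t$ copies of $r$ itself pack all flips into valid rankings, and Lemma~\ref{lemma:noCondorcet} is about propose thresholds in Algorithm~\ref{alg:king}, not about realizable medians); what is missing, and unattainable, is the quantitative optimality of $r$ in the padded profile.

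Your first case (unanimous pairs pinned by Pareto\,-\?Validity, charged to the local medians via Lemma~\ref{lem:localMedians}) matches the paper. For the disagreeing pairs, the paper argues per pair rather than globally: a disagreement among the $m_v$ means either a binary-type near-tie, where both orientations individually satisfy the bound because each is chosen by some correct node's bounded median, or a directed $3$-cycle among the median rankings, which forces the relevant edge weights in the profile of the $r_v$ to differ by at most $t$ and thus reduces to the binary case. The quantitative safety net that makes this work is not indistinguishability of $r$ itself but realizability of profiles---the triangle inequality on tournament weights exploited in the optimization of Appendix~\ref{app:LBcycles}---which caps the damage of ``mixed'' consensus rankings like the $r$ above at exactly the ratios of Section~\ref{sec:kemenyLB}. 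If you want to salvage your structure, the surrogate argument must be replaced by a direct cost comparison of $r$ against the true median using those realizability constraints.
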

\begin{proof}
Observe that the consensus ranking is derived from a preference profile formed by the medians $m_v$. Unless some correct node disagrees on an edge in this profile, the edge will be inside the consensus median since Algorithm \ref{alg:king} satisfies Pareto\,-\?Validity. This edge may not be inside the median ranking of all correct nodes, but the approximation ratio still satisfies the bounds due to Lemma \ref{lem:localMedians}. 

If the correct nodes disagree on an edge in the preference profile of medians, there was at least one correct node who either chose the opposite edge (binary case) for its median ranking or a different edge in a directed cycle (non-binary case).
Consider the binary case first. There, the forward and the backward edge chosen as the Kemeny median $m_v$ will both satisfy the lower bound, since there is a correct node choosing either of the cases. 
For the non-binary case we need to consider directed cycles formed by the median rankings. Every directed cycle in a tournament graph implies that there is a directed sub-cycle formed by three candidates $c_i, c_j, c_k$. The corresponding preference profile of correct median rankings must contain the three opposite rankings $c_i \succ c_j \succ c_k$, $c_j \succ c_k \succ c_i$ and $c_k \succ c_i \succ c_j$. This, however, implies that all three median rankings could have been derived from the preference profile of the rankings $r_v$ by modifying edge weights by $t$. The only case from which such a situation can result is when the forward and backward edge weights of each of the three pairs of candidates differed by at most $t$ in the preference profile of rankings $r_v$. This case, again, is equivalent to the binary case and satisfies the lower bound.  

\end{proof}

Note that the computed median can have a larger Kendall's $\tau$ distance to the preference ranking of all correct nodes than any $m_v$ has in the algorithm, since the Byzantine nodes can propose their own rankings as dictators in Algorithm \ref{alg:king}. Such a ranking would still satisfy the lower bound. 
%In Appendix \ref{app:paretoOptAlgo} we additionally show that Algorithm \ref{alg:kemenyMedianOpt} satisfies Pareto-\?Validity.
%\rtodo{wait, what? where is the whole rest?}

%%%%%%%%%%%%%%%%%%%%%%%%%%%%%%%%%%%%%%%%%%%%%%%%%%%%%%%%%%%%%%%%%%%%%%%%%%%%%%%%%%%%%%%%%%%%%%%%%%
%%%%%%%%    Conclusion   %%%%%%%%%%%%%%%%%%%%%%%%%%%%%%%%%%%%%%%%%%%%%%%%%%%%%%%%%%%%%%%%%%%%%%%%%
%%%%%%%%%%%%%%%%%%%%%%%%%%%%%%%%%%%%%%%%%%%%%%%%%%%%%%%%%%%%%%%%%%%%%%%%%%%%%%%%%%%%%%%%%%%%%%%%%%

\section{Discussion and Future Work}
In this paper we introduced a new Byzantine agreement problem which extends binary Byzantine agreement to rankings. We showed that rules for choosing a consensus ranking in voting theory fit well with requirements from Byzantine agreement. 
We further considered a special voting rule, the Kemeny median, for which we provided an optimal Byzantine agreement protocol that can tolerate up to $t<n/3$ Byzantine nodes. We do not claim to have chosen the best voting rule at this point, since such a rule simply does not exist due to impossibility results in voting theory. Instead, we think of our results as an inspiration to consider a larger pool of voting rules, such as approval voting, the Godgson's rule, and many others.

Byzantine agreement on rankings can also be of interest in several applications. Consider distributed machine learning as an example. Training data is usually collected by different (potentially Byzantine) parties and is stored in different places. When the amount of data is very large or the data is too sensitive to share, 
it is impossible to transmit the data and then train a global model. Instead, these parties could train their own models locally and then vote to predict the labels for new data points. A data point could be a picture, and every trained model outputs a ranking on the possible labels (panda $\succ$ gibbon $\succ \ldots \succ$ cat).
%\rtodo{what a chinese ranking of animals... :-)}\ytodo{:), it is, and there is actually a famous panda example in machine learning https://blog.openai.com/adversarial-example-research/, maybe we should put some place "gibbon"} 
The different parties could then use our Byzantine preferential voting protocols to find the best ranking. %Other applications may be found in ...??? 
%\dtodo{can anyone help?}\ytodo{tried once, but it seems the example of different doctors voting for treatment methods is indeed not good as it is difficult to say there are some Byzantine nodes} \rtodo{i think this is good.}

%%%%%%%%%%%%%%%%%%%%%%%%%%%%%%%%%%%%%%%%%%%%%%%%%%%%%%%%%%%%%%%%%%%%%%%%%%%%%%%%%%%%%%%%%%%%%%%%%%
%%%%%%%%    Literature   %%%%%%%%%%%%%%%%%%%%%%%%%%%%%%%%%%%%%%%%%%%%%%%%%%%%%%%%%%%%%%%%%%%%%%%%%
%%%%%%%%%%%%%%%%%%%%%%%%%%%%%%%%%%%%%%%%%%%%%%%%%%%%%%%%%%%%%%%%%%%%%%%%%%%%%%%%%%%%%%%%%%%%%%%%%%

\bibliography{literature}

%%%%%%%%%%%%%%%%%%%%%%%%%%%%%%%%%%%%%%%%%%%%%%%%%%%%%%%%%%%%%%%%%%%%%%%%%%%%%%%%%%%%%%%%%%%%%%%%%%
%%%%%%%%    Appendix  %%%%%%%%%%%%%%%%%%%%%%%%%%%%%%%%%%%%%%%%%%%%%%%%%%%%%%%%%%%%%%%%%%%%%%%%%%%%
%%%%%%%%%%%%%%%%%%%%%%%%%%%%%%%%%%%%%%%%%%%%%%%%%%%%%%%%%%%%%%%%%%%%%%%%%%%%%%%%%%%%%%%%%%%%%%%%%%

\newpage
\appendix

%%%%%%%%   Proofs for King algorithm   %%%%%%%%%%%%%%%%%%%%%%%%%%%%%%%%%%%%%%%%%%%%%%%%%%%%%%%%%%%

\section{Proof of Theorem \ref{thm:kingAlgo}}\label{app:kingAlgo}

We will divide the proof of Theorem \ref{thm:kingAlgo} into two parts:

\begin{lemma}\label{lem:paretoValidity}
Algorithm \ref{alg:king} satisfies Pareto\,-\?Validity.
\end{lemma}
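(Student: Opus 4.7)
The plan is to prove Pareto\,-\?Validity by induction on the round number, showing that the property ``every correct node's current ranking $r_v$ contains $c_i \succ c_j$'' is preserved through every round of Algorithm \ref{alg:king}, given that it holds initially.

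First I would fix an arbitrary pair $c_i, c_j$ such that every correct node initially ranks $c_i \succ c_j$. For the inductive step, I would argue about a single round, assuming at the start of that round every correct node's ranking still has $c_i \succ c_j$. In the communication phase each correct node broadcasts a ranking with $c_i \succ c_j$. Since there are at least $n-t$ correct nodes, each correct node receives at least $n-t$ broadcast rankings that rank $c_i$ above $c_j$, which meets the threshold of Step \ref{step:propose}, so every correct node broadcasts ``propose $c_i \succ c_j$''. In turn, every correct node receives at least $n-t$ such propose-messages, and since $n > 3t$ we have $n-t \geq 2t+1 \geq t+1$, so the condition in Step \ref{step:adjustRanking} is triggered for the pair $(c_i,c_j)$ at every correct node. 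Hence $c_i \succ c_j$ appears among the fixed pairs of the current round at every correct node, and since $r_v$ already has $c_i \succ c_j$, no adjustment conflicting with this pair is needed (Lemma \ref{lem:adjustRanking} guarantees that the fixed pairs can be realised consistently).

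Next I would handle the decision phase. By Step \ref{step:adaptDictator} a correct node only overwrites $r_v$ with $r_{dictator}$ if $r_{dictator}$ agrees with $r_v$ on every pair fixed in this round; since $c_i \succ c_j$ is fixed and $r_v$ currently orders $c_i$ above $c_j$, the node either keeps $r_v$ (preserving $c_i \succ c_j$) or replaces $r_v$ by a dictator ranking that also contains $c_i \succ c_j$. Either way, at the end of the round every correct node still ranks $c_i \succ c_j$, closing the induction.

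Applying the inductive statement through all $t+1$ rounds yields that, at termination, every correct node outputs a ranking containing $c_i \succ c_j$; combined with the agreement part of Theorem \ref{thm:kingAlgo} (handled separately), this means the common consensus ranking itself ranks $c_i \succ c_j$, which is Pareto\,-\?Validity. The only delicate point in this plan is the bookkeeping in the decision phase, specifically noting that ``fixed pairs'' is a per-round notion, so one must be careful that $c_i \succ c_j$ is re-fixed in every round; this is exactly what the $n-t \geq t+1$ counting argument above ensures, so there is no real obstacle beyond stating it cleanly.
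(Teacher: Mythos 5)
Your proposal is correct and takes essentially the same approach as the paper's proof: every correct node proposes $c_i \succ c_j$, so each correct node receives at least $n-t \geq t+1$ such propose messages and fixes the pair in Step~\ref{step:adjustRanking} of every round, and Step~\ref{step:adaptDictator} only lets a node adopt a dictator ranking that agrees on all fixed pairs. Your round-by-round induction simply makes explicit the invariant that the paper states tersely (``no correct node will change its ranking until a round in which the current dictator agrees on this binary relation''), so no further changes are needed.
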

\begin{proof}
Assume that all correct nodes agree on a binary relation between two candidates, e.g. $c_i \succ c_j$. Then all $n-t$ correct nodes will propose $c_i \succ c_j$ in Step \ref{step:propose} and every correct node will receive at least $n-t$ propose messages for the same pair. Thus, no correct node will change its ranking until a round in which the current dictator agrees on this binary relation $c_i \succ c_j$ in Step \ref{step:adaptDictator}. This holds for any pair of binary relations.
\end{proof}

\begin{lemma}
At the end of the algorithm all nodes will have agreed on the same ranking.
\end{lemma}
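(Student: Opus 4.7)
The plan is to exploit the pigeonhole principle: among the $t+1$ rounds of the algorithm, at least one has a correct dictator $w$, since there are at most $t$ Byzantine nodes. I would then argue (i) that after the round with a correct dictator $w$, all correct nodes end up holding the same ranking, namely $r_w$, and (ii) that once all correct nodes share a common ranking at the start of some round, every subsequent round preserves this property. Together these yield agreement after $t+1$ rounds.

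For (i), the crucial sub-claim is that the dictator's broadcasted ranking $r_w$ (after $w$'s own adjustment in Step \ref{step:adjustRanking}) is consistent with every pair $c_k \succ c_l$ fixed by any correct node $v$ in that round; if so, every correct node passes the check of Step \ref{step:adaptDictator} and sets $r_v := r_w$. If $v$ fixed $c_k \succ c_l$ then $v$ received $t+1$ propose messages, at least one from a correct proposer $u$, who must have observed $n-t$ rankings with $c_k \succ c_l$, so at least $n-2t$ correct nodes rank $c_k$ before $c_l$. A counting argument in the spirit of Lemma \ref{lemma:noCondorcet} (using $n > 3t$) rules out $w$ fixing the opposite pair $c_l \succ c_k$, and the construction of Lemma \ref{lem:adjustRanking} ensures that $r_w$ respects whichever direction $w$ itself fixes; the remaining case, in which $w$ fixes neither direction, has to be closed by arguing that $w$'s own ranking already places $c_k$ before $c_l$.

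For (ii), suppose all correct nodes start a round with the same ranking $r$. Then each ordered pair of $r$ is proposed by all $n-t \geq t+1$ correct nodes, so every correct node receives at least $t+1$ propose messages per such pair and fixes exactly the pairs of $r$. If the (possibly Byzantine) dictator broadcasts a ranking that disagrees with $r$ on some pair, all correct nodes reject it via Step \ref{step:adaptDictator} and keep $r$; otherwise the dictator's ranking equals $r$ and adoption is trivially consistent. Either way, agreement is preserved, and applying this inductively to all rounds after the correct-dictator round yields the lemma.

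The main obstacle I expect is the final case in (i): ensuring that $r_w$ orders $c_k$ before $c_l$ even when $w$ itself did not fix this pair, since Byzantine nodes may selectively suppress their propose messages toward $w$ so that $w$ receives fewer than $t+1$ proposes for the pair. I expect the resolution to exploit that at least $n - 2t > t$ correct nodes rank $c_k$ before $c_l$, which, combined with the impossibility of two correct nodes fixing opposite directions, should constrain a correct dictator's ranking—after its own adjustment according to Lemma \ref{lem:adjustRanking}—to be compatible with $c_k \succ c_l$.
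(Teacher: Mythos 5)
Your overall decomposition (pigeonhole over the $t+1$ dictators, plus persistence of agreement once established) matches the paper's, and your part (ii) is sound as stated. But the obstacle you flag in part (i) is a genuine gap, and the resolution you sketch cannot close it. Knowing that at least $n-2t>t$ correct nodes rank $c_k$ before $c_l$ places no constraint on the private ranking of the correct dictator $w$: only $n-2t$ of the at least $n-t$ correct nodes are guaranteed to hold $c_k\succ c_l$, so up to $t$ correct nodes---possibly including $w$---may hold $c_l\succ c_k$, and no counting argument in the spirit of Lemma~\ref{lemma:noCondorcet} applies to a single node's own ranking. Concretely, Byzantine nodes can equivocate in the ranking broadcast so that exactly one correct node $u$ sees $c_k$ above $c_l$ in $n-t$ rankings and proposes the pair in Step~\ref{step:propose}, and then send their own ``propose $c_k\succ c_l$'' messages only to one correct node $v$; then $v$ reaches the $t+1$ threshold of Step~\ref{step:adjustRanking} and fixes the pair, while $w$ receives fewer than $t+1$ proposes, does not fix it, keeps $c_l\succ c_k$ in its own ranking, and broadcasts a ranking that $v$ must reject in Step~\ref{step:adaptDictator} (already $n=4$, $t=1$ realizes this). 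So under your reading, in which any pair fixed at the $t+1$ level blocks adoption of the dictator, claim (i) is simply false, and the ``remaining case'' you identify cannot be closed.

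The paper's proof takes the only workable route and never attempts to control $w$'s ranking on pairs $w$ did not fix: it uses the two-threshold invariant of the King algorithm. A pair should block adoption of the dictator only when the node received its propose message $n-t$ times; in that case at least $n-2t\ge t+1$ of the proposers are correct and their proposes reach \emph{every} correct node, so every correct node---in particular a correct dictator---fixes the pair in Step~\ref{step:adjustRanking} and incorporates it into its ranking by Lemma~\ref{lem:adjustRanking}, whence $r_{dictator}$ agrees with every blocking pair and all correct nodes adopt it. This is exactly the level at which the paper's appendix argument operates (``any propose message that a correct node has received $n-t$ times\dots''), even though Step~\ref{step:adaptDictator} of the pseudocode, read literally, refers to the $t+1$-fixed pairs; your analysis in effect uncovers this mismatch between the listing and its proof. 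The structural lesson is that one must \emph{adjust} the ranking at support $t+1$ (so that a correct dictator is guaranteed to carry every strongly supported pair) but \emph{refuse} the dictator only at support $n-t$; with that reading, your part (ii) goes through unchanged, since when all correct nodes share $r$, every pair of $r$ is proposed by all $n-t$ correct nodes and hence is strongly fixed by everyone, while opposite pairs gather at most $t$ proposes.
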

\begin{proof}
We will show that all correct nodes will always adjust their ranking to the ranking of a correct dictator. For this we need to verify that every correct node will enter the if-clause in Step \ref{step:adaptDictator} if the dictator is a correct node. Note that any propose message that a correct node has received $n-t$ times in Step \ref{step:adjustRanking}, will be received at least $n-2t > t+1$ times by every other correct node who will incorporate this relation into its own ranking. This way every correct dictator will agree on all binary relations that any correct node has received in Step \ref{step:adjustRanking} of the algorithm.

At latest in a round with a correct dictator all nodes will adjust their ranking to the ranking of the dictator and, by Lemma \ref{lem:paretoValidity}, will receive the same ranking at the end of the algorithm.
\end{proof}

%%%%%%%%   Opposite to Kemeny ranking   %%%%%%%%%%%%%%%%%%%%%%%%%%%%%%%%%%%%%%%%%%%%%%%%%%%%%%%%%%

\section{Opposite Ranking to the Kemeny Median}\label{app:oppositeRanking}

First we show that the opposite ranking of the Kemeny median always gives the worst possible solution which a Byzantine nodes can pick. Note that the sum of the weights of all edges in a tournament graph produced by the rankings of correct nodes only is $(n-t)\cdot\binom{m}{2}$. The tournament graph formed by a ranking $r$ and the tournament graph formed by the opposite ranking $\bar{r}$ are complementary graphs with respect to the tournament graph of all correct rankings. Consider the weights of all possible rankings. Each ranking that minimizes the Kendall's $\tau$ distance must have an opposite ranking that maximizes the same distance, i.e., $\kdtau(r,\mathcal{P}) + \kdtau(\bar{r}, \mathcal{P}) = (n-t)\cdot\binom{m}{2}$. This shows that the opposite ranking of the Kemeny median is also the ranking furthest away from it and can always be chosen as a Byzantine value.

The opposite ranking is however not the only ranking that the Byzantine nodes can choose. Assume all correct nodes agree on the preference $c_i \succ c_j$ such that this pair will always belong to the Kemeny median of the correct rankings. Then, the Byzantine nodes can pick either $c_i \succ c_j$ or $c_j \succ c_i$ for their ranking, since this strategy does not have any influence on the Kemeny median of all rankings. One example for such a case is depicted in Figure \ref{fig:worstCaseExampleCircle}, where the Byzantine rankings are deliberately chosen not to be the opposite rankings of the Kemeny median of all correct nodes, although such a strategy would give equivalent results. The number of possible solutions which the Byzantine nodes can choose increases with the number of binary relations on which all correct nodes agree.

%%%%%%%%   Proof that lower bound for cycles is optimal  %%%%%%%%%%%%%%%%%%%%%%%%%%%%%%%%%%%%%%%%%

\section{Proof of the Lower Bound for Cycles}\label{app:LBcycles}

Consider a cycle on the three candidates $c_1, c_2, c_3$ with majority edges $(c_1,c_2),(c_2,c_3)$ and $(c_3,c_1)$. Let the weights on majority edges $(c_1,c_2),(c_2,c_3)$ and $(c_3,c_1)$ be $x,y,z$ and the weights on corresponding minority edges $n-t-x,n-t-y,n-t-z$ respectively. 
In order to show that a cycle with such weights is the worst case we need to consider additional properties of a tournament graph:

\begin{itemize}[noitemsep]
 \item The weights of the forward and the backward edge between any two candidates sum up to the number of rankings which is $n-t$ without counting Byzantine rankings and $n$ if we count all rankings.
 
 \item Consider the tournament graph of correct nodes. Since we exclude cases which reduce to binary agreement and due to the definition of majority edges, we can assume $x\ge y\ge z\ge n/2$. 
 \item All weights satisfy the \emph{triangle inequality} on directed edges, which states that for any triplet $i,j,k$ holds
 $$w_{i,j}+w_{j,k}\ge w_{i,k},$$ 
 where $w_{i,j}$ denotes the weight on the directed edge $(c_i,c_j)$. 
 According to the triangle inequality, it holds that
 $$n-t \le x+y+z \le 2(n-t).$$
 \item Byzantine nodes can change the tournament graph by only adding new edges which also satisfy the triangle inequality. 
\end{itemize}
Based on the above conditions, the correct ranking should be $c_1\succ c_2\succ c_3$ and the total Kendall's distance is $2(n-t)-x-y+z$. In order to see which case leads us to the largest approximation ratio, we examine all other five possible rankings after considering Byzantine rankings. Since we exclude the binary case, we are able to reduce the five cases to only two which do not reduce to binary agreement: %\ytodo{maybe need to write it in a better way} 
$c_2\succ c_3\succ c_1$ and $c_3\succ c_1 \succ c_2$.

For the case $c_2\succ c_3\succ c_1$, the total Kendall's distance is $2(n-t)+x-y-z$. In order for this ranking to be chosen as the Kemeny median after adding Byzantine rankings, the inequality $x-z\le t$ has to be satisfied. Finding the worst case for the correct nodes is equivalent to solving the optimization problem
\begin{align*}
\max& \hspace{0.5cm} \frac{2(n-t)+x-y-z}{2(n-t)-x-y+z}\\
\hbox{s.t.}&\hspace{0.5cm} x-z\le t\\
 &\hspace{0.5cm}n-t \le x+y+z\le 2(n-t)\\
 &\hspace{0.5cm}n/2 \le z\le y\le x
\end{align*}
This optimization problem has a solution when $k = n/t \ge 4$.
\begin{itemize}
\item When $k\in[4, 6]$, the maximum ratio is obtained by letting $x = n - 2t$ and $y = z = n/2.$ In this case, the maximum ratio is $2-4/k$. 
\item When $k\in(6,8]$, the maximum ratio is obtained by letting $x = n/2 + t$, $y = n-3t$ and $z = n/2.$ In this case, the maximum ratio is $1+2/k$.
\item When $k\in (8,\infty)$, the maximum ratio is obtained by letting 
$x = y = (2n-t)/3$ and $z = (2n - 4t)/3$. In this case, the maximum ratio is $(2k-1)/(2k-4).$
\end{itemize}

For the second case $c_3\succ c_1\succ c_2$, the total Kendall's distance is $2(n-t)-x+y-z$. The condition for this ranking to be the median ranking after adding Byzantine rankings is $y-z\le t.$ Finding the worst case is then equivalent to solving 
\begin{align*}
\max& \hspace{0.5cm} \frac{2(n-t)-x+y-z}{2(n-t)-x-y+z}\\
\hbox{s.t.}&\hspace{0.5cm} y-z\le t\\
 &\hspace{0.5cm}n-t \le x+y+z\le 2(n-t)\\
 &\hspace{0.5cm}n/2 \le z\le y\le x
\end{align*}
This optimization problem also only has a solution when $k = n/t \ge 4$.
\begin{itemize}
\item When $k\in[4, 8]$, the maximum ratio is obtained by letting $x = y = 3n/4 - t, z = n/2.$ In this case, the maximum ratio is $3/2-2/k$. 
\item When $k\in (8,\infty)$, the maximum ratio is obtained by letting 
$x = n-3t, y =n/2 +t$ and $z = n/2$. In this case, the maximum ratio is $1+2/k.$
\end{itemize}

%%%%%%%%   Proof of Pareto for our last algorithm  %%%%%%%%%%%%%%%%%%%%%%%%%%%%%%%%%%%%%%%%%%%%%%%

\section{Proof of Theorem \ref{thm:correctnessAlg2}}\label{app:paretoOptAlgo}

In Lemma \ref{lem:localMedians} and \ref{lem:optApprox} we proved that the computed solution in Step \ref{step:paretoOnMedian} of Algorithm \ref{alg:kemenyMedianOpt} satisfies the bounds from Section \ref{sec:kemenyLB}. In the following lemma we will show that the computed median also satisfies Pareto\,-\?Validity. 

\begin{lemma}
Algorithm \ref{alg:kemenyMedianOpt} satisfies Pareto\,-\?Validity.
\end{lemma}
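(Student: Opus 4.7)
The plan is to reduce Pareto\,-\?Validity of Algorithm~\ref{alg:kemenyMedianOpt} to the Pareto\,-\?Validity of Algorithm~\ref{alg:king} already established in Lemma~\ref{lem:paretoValidity}. Suppose every correct node's input $r_v$ ranks $c_i$ above $c_j$. It suffices to show that every correct node's Kemeny median $m_v$, computed in Step~\ref{step:computeMedian}, also ranks $c_i$ above $c_j$; then the $m_v$'s serve as the inputs to Algorithm~\ref{alg:king} in Step~\ref{step:paretoOnMedian}, and Lemma~\ref{lem:paretoValidity} forces the output consensus ranking to respect $c_i\succ c_j$.

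Fix a correct node $v$ and consider its received profile $\mathcal{P}_v$. It contains all $n-t$ correct rankings (each placing $c_i$ above $c_j$) plus at most $t$ Byzantine rankings, so in the tournament graph of $\mathcal{P}_v$ the forward edge has weight $w(c_i,c_j)\ge n-t$ and the backward edge weight $w(c_j,c_i)\le t$, giving margin at least $n-2t>n/3$. I would prove by contradiction that $m_v$ ranks $c_i$ above $c_j$. Suppose instead that $c_j$ appears above $c_i$ in $m_v$, with candidates $x_1,\ldots,x_l$ in between. Kemeny optimality of $m_v$ forbids any single-candidate insertion move from strictly decreasing the Kendall cost; in particular, lifting $c_i$ to the slot immediately above $c_j$ gives
\[
\sum_{k=1}^{l}\bigl[w(x_k,c_i)-w(c_i,x_k)\bigr] \;\ge\; w(c_i,c_j)-w(c_j,c_i) \;\ge\; n-2t,
\]
and symmetrically, pushing $c_j$ down to just below $c_i$ yields $\sum_{k=1}^{l}\bigl[w(c_j,x_k)-w(x_k,c_j)\bigr]\ge n-2t$. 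In particular $l\ge 1$, since for $l=0$ the right-hand side is already the cost drop of a single adjacent swap and contradicts optimality directly.

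Adding the two inequalities and rewriting via the six-way partition of rankings according to the relative order of $c_j,x_k,c_i$, the left-hand side becomes $2\sum_k (N_1(x_k)-N_2(x_k))$, where $N_1(x_k)=|\{\pi:c_j\succ_\pi x_k\succ_\pi c_i\}|$ and $N_2(x_k)=|\{\pi:c_i\succ_\pi x_k\succ_\pi c_j\}|$. Since only the at most $t$ Byzantine rankings can have $c_j\succ c_i$ at all, $N_1(x_k)\le t$ and $\sum_k N_1(x_k)\le lt$, while $N_2(x_k)\ge 0$; the displayed inequality hence forces $lt\ge n-2t$. The main obstacle will be converting this into an outright contradiction for every permissible value of $l\le m-2$: my plan is to layer on the further optimality conditions coming from shifting each individual intermediate $x_k$ across $c_i$ (or across $c_j$), combined with the pigeonhole observation that the at most $t$ Byzantine rankings with $c_j\succ c_i$ can collectively place only a bounded total number of the $x_k$'s strictly between $c_j$ and $c_i$ in their own rankings. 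Once the claim $c_i\succ c_j\in m_v$ holds at every correct node, applying Lemma~\ref{lem:paretoValidity} to Step~\ref{step:paretoOnMedian} delivers a consensus ranking in which $c_i$ precedes $c_j$, completing the proof.
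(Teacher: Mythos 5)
Your high-level reduction is exactly the paper's: show that every local median $m_v$ from Step~\ref{step:computeMedian} preserves any pair ranked $c_i\succ c_j$ by all correct nodes, then invoke Lemma~\ref{lem:paretoValidity} for Step~\ref{step:paretoOnMedian}. The paper disposes of the first step in one sentence (an edge of weight $n-t$ always belongs to a Kemeny median because, by the triangle inequality, the correct nodes alone cannot produce a directed cycle with weight $n-2t>(n-t)/2$ on all \emph{three} majority edges), whereas you attempt an honest exchange argument. But your argument stops exactly where the difficulty lies: the two block-move optimality conditions only yield $\sum_k\bigl(N_1(x_k)-N_2(x_k)\bigr)\ge n-2t$ and hence $lt\ge n-2t$, i.e.\ $l\ge k-2$, which is no contradiction since $l$ may be as large as $m-2$. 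The pigeonhole you propose to close the gap is unsound: a single Byzantine ranking such as $c_j\succ x_1\succ\cdots\succ x_l\succ c_i$ places \emph{all} $l$ intermediates strictly between $c_j$ and $c_i$, so $\sum_k N_1(x_k)$ can genuinely reach $lt$, and no further bound is available.

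In fact the step you are trying to prove fails in the regime $l\ge k-2$ that your own computation isolates, so no layering of local optimality conditions can rescue it. Take $m=6$ candidates in cyclic order $c_j,x_1,x_2,x_3,x_4,c_i$, and $n=7$, $t=2<n/3$. Let the two Byzantine nodes submit $M=c_j\succ x_1\succ x_2\succ x_3\succ x_4\succ c_i$ and let the five correct nodes submit the five other cyclic rotations, from $x_1\succ x_2\succ x_3\succ x_4\succ c_i\succ c_j$ up to $c_i\succ c_j\succ x_1\succ x_2\succ x_3\succ x_4$; every correct ranking places $c_i$ above $c_j$. Rotations shifted by $d$ are at Kendall distance $d(6-d)$, so $\kdtau(M,\mathcal{P})=5+8+9+8+5=35$ while every other rotation costs $35+d(6-d)\ge 40$; moreover, every pair at cyclic gap $g$ has weights $7-g$ versus $g$, and the majority tournament contains the three edge-disjoint directed cycles $c_j\to x_1\to\cdots\to c_i\to c_j$, $c_j\to x_2\to x_4\to c_j$ and $x_1\to x_3\to c_i\to x_1$ with cheapest reversals $3,1,1$, so every ranking costs at least $30+5=35$, with equality only at $M$. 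Thus $M$ is the unique Kemeny median although all correct inputs rank $c_i\succ c_j$, and Algorithm~\ref{alg:kemenyMedianOpt} then outputs $M$. Note that the paper's own justification only excludes strong \emph{triangles} and is silent about longer cycles such as this one; the claim is provable when $t<n/4$ (then $w(c_i,c_j)>3n/4$, and $3/4$-majorities are always respected by Kemeny medians), but not from $t<n/3$. So your instinct to distrust the summed inequality was right: the remaining ``plan'' cannot be executed, and the lemma itself needs a stronger hypothesis than the one you (and the paper) work under.
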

\begin{proof}
Assume, all nodes agree on a preference profile $c_i \succ c_j$. 
Then, there is a directed edge between $c_i$ and $c_j$ of weight $n-t$. Such an edge always belongs to a Kemeny median, since there cannot be any directed cycle formed by correct nodes only with weights $n-2t>(n-t)/2$ on all three majority edges due to the triangle inequality. This way, all correct nodes will have the preference $c_i \succ c_j$ in their median ranking $m_v$. Since Algorithm \ref{alg:king} satisfies Pareto\,-\?Validity, the consensus median ranking will also satisfy $c_i \succ c_j$.
\end{proof}

It remains to show that Algorithm \ref{alg:kemenyMedianOpt} terminates and has the same order of message complexity as Algorithm \ref{alg:king}.

\begin{lemma}
Algorithm \ref{alg:kemenyMedianOpt} terminates within $t+3$ rounds exchanging $O(tn^2m\log{m})$ messages. 
\end{lemma}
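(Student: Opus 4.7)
The plan is to verify the two bounds separately by straightforward accounting. For the round complexity, I would first note that Step~1 of Algorithm~\ref{alg:kemenyMedianOpt} contributes a single communication round in which every node broadcasts $r_v$, and Step~2 is purely local Kemeny computation and so adds no rounds. Step~3 then invokes Algorithm~\ref{alg:king} on input $m_v$. Inspection of that algorithm shows that each of its $t+1$ outer iterations can be implemented in a constant number of communication rounds (a combined ranking-and-propose exchange, followed by the dictator's broadcast of $r_w$). By pipelining broadcasts across iterations the whole King-style procedure fits in $t+1$ communication rounds, so together with Step~1 and one additional round used for the first dictator broadcast inside Algorithm~\ref{alg:king} the total is $t+3$.

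For the message bound I would count by message type. Over the $O(t)$ communication rounds, each of the $n$ nodes performs $O(1)$ all-to-all broadcasts per round, producing $O(n^2)$ point-to-point transmissions per round and $O(tn^2)$ in total. Each such transmission carries either a ranking of the $m$ candidates or a bundle of propose messages derivable from such a ranking. A permutation of $m$ elements can be encoded in $\lceil\log_{2}(m!)\rceil = O(m\log m)$ bits, and the propose bundles can be piggy-backed on the corresponding ranking without changing the asymptotic message size. Multiplying the number of transmissions by the per-message bit length yields the claimed $O(tn^2 m\log m)$ bound.

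The main obstacle is the bookkeeping inside Algorithm~\ref{alg:king}: one has to justify that the ranking broadcast, the propose broadcast and the dictator broadcast of successive iterations can be interleaved so that each iteration costs only one ``round'' in the amortized sense, and that the broadcast of $r_v$ in Step~1 of Algorithm~\ref{alg:kemenyMedianOpt} does not force a cascade of extra rounds before Algorithm~\ref{alg:king} can begin on $m_v$. Once this accounting is in place, both the $t+3$ round bound and the $O(tn^2 m\log m)$ message bound follow immediately from the per-round cost analysis above.
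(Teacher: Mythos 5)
Your message-complexity accounting is essentially the paper's: $O(t)$ rounds, $n^2$ point-to-point transmissions per round, each of size $O(m\log m)$ bits for a ranking, yielding $O(tn^2m\log m)$. (Both you and the paper gloss over the propose messages of Step~\ref{step:propose}: a node may propose up to $\binom{m}{2}$ pairs, and the set of proposed pairs need not be a ranking, so the claim that the bundle ``piggy-backs'' within $O(m\log m)$ bits is not actually justified --- but since the paper's own proof silently makes the same simplification, I will not hold that against you.)

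The genuine gap is in your round count, specifically the pipelining claim. The paper does not pipeline anything: it takes as given (from Section~\ref{sec:king}) that Algorithm~\ref{alg:king} runs in $t+1$ rounds, where a ``round'' is by convention one full outer iteration (communication phase, dictator phase, decision phase), and it charges the remaining two rounds to the two additional steps of Algorithm~\ref{alg:kemenyMedianOpt} preceding Step~\ref{step:paretoOnMedian}, giving $(t+1)+2 = t+3$. Your accounting instead tries to compress each iteration of Algorithm~\ref{alg:king} to one communication round via interleaving, and this would fail: within an iteration, the propose broadcast depends on the rankings received in that same iteration, the dictator's broadcast $r_w$ depends on the dictator's adjustment in Step~\ref{step:adjustRanking}, and iteration $i+1$'s ranking broadcast depends on iteration $i$'s decision in Step~\ref{step:adaptDictator}. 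The broadcasts thus form a causal chain of depth three per iteration, so no amortization or interleaving can reduce an iteration below three sequential communication steps; under a strict per-broadcast round count Algorithm~\ref{alg:king} would take roughly $3(t+1)$ rounds, contradicting $t+3$. Your extra ``one additional round used for the first dictator broadcast'' is likewise an invention to make the arithmetic come out: the dictator broadcast is internal to every iteration, not a separate round, and the two additional rounds in the paper's count come from Algorithm~\ref{alg:kemenyMedianOpt} itself. The correct fix is not a pipelining argument but simply adopting the iteration-as-round convention under which the $t+1$ bound for Algorithm~\ref{alg:king} was stated in the first place.
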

\begin{proof}
Note that Algorithm \ref{alg:king} terminates after $t+1$ rounds. Algorithm \ref{alg:kemenyMedianOpt} has two additional rounds in which the messages are exchanged. This way, our algorithm terminates within $t+3$ rounds. The message complexity of Algorithm \ref{alg:king} is $O(tn^2m\log{m})$, since in every round, each of the $n$ nodes sends  a messages of size $m\log{m}$ to $n$ other nodes. In the additional two steps of Algorithm \ref{alg:kemenyMedianOpt} $2n^2m\log{m}$ messages are exchanged which gives the same message complexity $O(tn^2m\log{m})$ for the second algorithm.
\end{proof}

\end{document}